\newcommand\ignore[1]{\ignorespacesafterend}
\newtheorem{theorem}{Theorem}
\numberwithin{theorem}{section}
\newtheorem{lemma}[theorem]{Lemma}
\theoremstyle{definition}
\def\calign@preamble{%
   &\hfil\strut@
    \setboxz@h{\@lign$\m@th\displaystyle{##}$}%
    \ifmeasuring@\savefieldlength@\fi
    \set@field
    \hfil
    \tabskip\alignsep@
}
\let\cmeasure@\measure@
\patchcmd\cmeasure@{\divide\@tempcntb\tw@}{}{}{}
\patchcmd\cmeasure@{\divide\@tempcntb\tw@}{}{}{}
\patchcmd\cmeasure@{\ifodd\maxfields@
  \global\advance\maxfields@\@ne
  \fi}{}{}{}    
\newenvironment{calign}
{%
  \let\align@preamble\calign@preamble
  \let\measure@\cmeasure@
  \align
}
{%
  \endalign
}  
\newcommand\id{\mathrm{id}}
\newcommand\Hom{\mathrm{Hom}}
\newcommand\cat[1]{\ensuremath{\mathbf{#1}}}
\def\fillA{blue!20}
\tikzset{keyvertexcolour/.initial=black}
\tikzset{vertex colour/.style={keyvertexcolour={#1}}}
\newlength\vertexradius
\newlength\innerradius
\def\halfanglesep{24}
\def\tripleanglesep{24}
\def\sideangle{30}
\def\tripleanglesep{40}
        \pgfextractx{\pgf@x}{\pgfpointpolar{90+\halfanglesep}{\innerradius}}
        \pgfextracty{\pgf@y}{\pgfpointpolar{90+\halfanglesep}{\innerradius}}
        \pgfextractx{\pgf@x}{\pgfpointpolar{90-\halfanglesep}{\innerradius}}
        \pgfextracty{\pgf@y}{\pgfpointpolar{90-\halfanglesep}{\innerradius}}
        \pgfextractx{\pgf@x}{\pgfpointpolar{90}{\innerradius}}
        \pgfextracty{\pgf@y}{\pgfpointpolar{90}{\innerradius}}
        \pgfextractx{\pgf@x}{\pgfpointpolar{\sideangle+\halfanglesep}{\innerradius}}
        \pgfextracty{\pgf@y}{\pgfpointpolar{\sideangle+\halfanglesep}{\innerradius}}
        \pgfextractx{\pgf@x}{\pgfpointpolar{\sideangle-\halfanglesep}{\innerradius}}
        \pgfextracty{\pgf@y}{\pgfpointpolar{\sideangle-\halfanglesep}{\innerradius}}
        \pgfextractx{\pgf@x}{\pgfpointpolar{\sideangle}{\innerradius}}
        \pgfextracty{\pgf@y}{\pgfpointpolar{\sideangle}{\innerradius}}
        \pgfextractx{\pgf@x}{\pgfpointpolar{-\sideangle+\halfanglesep}{\innerradius}}
        \pgfextracty{\pgf@y}{\pgfpointpolar{-\sideangle+\halfanglesep}{\innerradius}}
        \pgfextractx{\pgf@x}{\pgfpointpolar{-\sideangle-\halfanglesep}{\innerradius}}
        \pgfextracty{\pgf@y}{\pgfpointpolar{-\sideangle-\halfanglesep}{\innerradius}}
        \pgfextractx{\pgf@x}{\pgfpointpolar{-90-\halfanglesep}{\innerradius}}
        \pgfextracty{\pgf@y}{\pgfpointpolar{-90-\halfanglesep}{\innerradius}}
        \pgfextractx{\pgf@x}{\pgfpointpolar{-90+\halfanglesep}{\innerradius}}
        \pgfextracty{\pgf@y}{\pgfpointpolar{-90+\halfanglesep}{\innerradius}}
        \pgfextractx{\pgf@x}{\pgfpointpolar{-90-\tripleanglesep}{\innerradius}}
        \pgfextracty{\pgf@y}{\pgfpointpolar{-90-\tripleanglesep}{\innerradius}}
        \pgfextractx{\pgf@x}{\pgfpointpolar{-90}{\innerradius}}
        \pgfextracty{\pgf@y}{\pgfpointpolar{-90}{\innerradius}}
        \pgfextractx{\pgf@x}{\pgfpointpolar{-90+\tripleanglesep}{\innerradius}}
        \pgfextracty{\pgf@y}{\pgfpointpolar{-90+\tripleanglesep}{\innerradius}}
        \pgfextractx{\pgf@x}{\pgfpointpolar{-90}{\innerradius}}
        \pgfextracty{\pgf@y}{\pgfpointpolar{-90}{\innerradius}}
        \pgfextractx{\pgf@x}{\pgfpointpolar{180+\sideangle+\halfanglesep}{\innerradius}}
        \pgfextracty{\pgf@y}{\pgfpointpolar{180+\sideangle+\halfanglesep}{\innerradius}}
        \pgfextractx{\pgf@x}{\pgfpointpolar{180+\sideangle-\halfanglesep}{\innerradius}}
        \pgfextracty{\pgf@y}{\pgfpointpolar{180+\sideangle-\halfanglesep}{\innerradius}}
        \pgfextractx{\pgf@x}{\pgfpointpolar{180+\sideangle}{\innerradius}}
        \pgfextracty{\pgf@y}{\pgfpointpolar{180+\sideangle}{\innerradius}}
        \pgfextractx{\pgf@x}{\pgfpointpolar{180-\sideangle+\halfanglesep}{\innerradius}}
        \pgfextracty{\pgf@y}{\pgfpointpolar{180-\sideangle+\halfanglesep}{\innerradius}}
        \pgfextractx{\pgf@x}{\pgfpointpolar{180-\sideangle-\halfanglesep}{\innerradius}}
        \pgfextracty{\pgf@y}{\pgfpointpolar{180-\sideangle-\halfanglesep}{\innerradius}}
        \pgfextractx{\pgf@x}{\pgfpointpolar{180-\sideangle}{\innerradius}}
        \pgfextracty{\pgf@y}{\pgfpointpolar{180-\sideangle}{\innerradius}}
\def\nwangle{180-\sideangle}
\def\neangle{\sideangle}
          \global\booltrue{partpath}
          \global\booltrue{partpath}
\def\diagramscale{0.9}
\def\componentwidth{1.2cm}
\definecolor{green}{rgb}{0.2,1,0.2}
\begin{document}

\title{Bicategorical Semantics for
\\
Nondeterministic Computation}

\def\extragap{2pt}
\author{
\normalsize
\begin{tabular}{cc}
\begin{tabular}{c}
Mike Stay
\\
\texttt{stay@google.com}
\\[\extragap]
Google Inc.
\\
California, USA
\\[\extragap]
Department of Computer Science
\\
University of Auckland, New Zealand
\end{tabular}
&
\begin{tabular}{c}
Jamie Vicary
\\
\texttt{jamie.vicary@cs.ox.ac.uk}
\\[\extragap]
Centre for Quantum Technologies
\\
National University of Singapore
\\[\extragap]
Department of Computer Science
\\
University of Oxford, UK
\end{tabular}
\end{tabular}}

\date{January 15, 2013}

\maketitle

\begin{abstract}
We outline a bicategorical syntax for the interaction between public and private information in classical information theory. We use this to give high-level graphical definitions of encrypted communication and secret sharing protocols, including a characterization of their security properties. Remarkably, this makes it clear that the protocols have an identical abstract form to the quantum teleportation and dense coding procedures, yielding evidence of a deep connection between classical and quantum information processing. We also formulate public-key cryptography using our scheme. Specific implementations of these protocols as nondeterministic classical procedures are  recovered by applying our formalism in a symmetric monoidal  bicategory of matrices of relations.
\end{abstract}

\section{Introduction}

\subsection{Background}

\noindent Whitehead credited Hamilton and De Morgan with the invention of ``universal algebra'', the idea that we can describe many mathematical structures as sets equipped with functions that are subject to equations~\cite{g68-ua}.  Modern object-oriented programming is done in essentially the same way: to define a data structure, we equip a type with methods and insist that implementations pass a test suite.  The programming language gives us a syntax to express an interface as well as a way to write implementations, each of which picks out a different semantics for the interface.  Lambek~\cite{l80-flc} showed that such syntactic descriptions of interfaces correspond to free cartesian closed categories, and implementations are simply cartesian closed functors from those syntactic categories to the category of sets and functions. The related area of information flow~\cite{w05-cae} is the application of type theory to security.  The types correspond to security levels like ``public'' and ``private'', and a well-typed program is a proof that an attacker cannot distinguish two computations from their outputs if they only vary in their private inputs.  Such a derivation system corresponds to a cartesian closed category.

Similarly, in quantum information theory, a categorical approach developed initially by Abramsky and Coecke~\cite{ac04-csqp, ac08-cqm, hv13-cqm} has been shown to be extremely fruitful, based on the category of finite-dimensional Hilbert spaces. And in physics, Feynman diagrams also follow this pattern, except instead of using a category presented syntactically, it uses a category presented graphically.

The mathematical notion underlying all these areas is that of symmetric monoidal category, now widely recognized as an important unifying concept~\cite{bs-rosetta}. Given this observation, it is natural to consider what role can be played by symmetric monoidal \emph{bicategories} in the description of classical phenomena in computer science. Bicategories are algebraic structures with an extra layer of descriptive power compared to ordinary categories, and have already been demonstrated to be of importance in quantum field theory~\cite{bd95-hda} and quantum information~\cite{v12-hsqp}, where their key strength lies in their ability to encode important connectivity and locality information in a natural way.

\subsection{Overview}

We propose a bicategorical syntax for reasoning about cryptographic processes in classical computation. The extra structure of our higher syntax provides a geometrical mechanism for distinguishing public and private information, and also their interactions, including publication, privatization, copying and information retrieval processes.

Since bicategories have a well-studied 2\-dimensional graphical calculus, this becomes available for the description of our classical computational processes, and gives a powerful and elegant formalism with which to reason about them. A particular diagram can he interpreted as a history of computational events, with the vertical direction representing time, which flows from bottom to top. To use the terminology of physics, they are `spacetime diagrams' for our computation. For example, the following diagram represents an encrypted communication protocol making use of a one-time pad:
\begin{equation}
\label{eq:introencryptedcommunication}
\begin{aligned}
\begin{tikzpicture}[scale=\diagramscale, thick]
    \begin{pgfonlayer}{foreground}
    \node (U) [minimum width=\componentwidth, minimum height=18pt, fill=white, draw=black, thick] at (3,2.5) {$D$};
    \node (M) [minimum width=\componentwidth, minimum height=18pt, fill=white, draw=black, thick, anchor=center] at (1.0,1.0) {$E$};
    \end{pgfonlayer}
    \draw [fill=\fillA] (0.5,3.3)
        to (0.5,1.0)
        to (1.5,1.0) to (1.5,2 |- M.north)
        to [out=up, in=down] (2.5,2.0 |- U.south) to (2.5,3.3);
    \draw (0.5,-0.5) to (0.5,1.5);
    \draw (1.5,1.6 |- M.south)
        to [out=down, in=down, looseness=2] (2.5,0.5 |- M.south) to [out=up, in=down] (3.5,2 |- U.south)
        to (3.5,3.3);
    \draw [dashed] (2.0,-1) to (2.0,3.3);
    \node at (1,-0.85) {Alice};
    \node at (3,-0.85) {Bob};
\end{tikzpicture}
\end{aligned}
\quad=\quad
\begin{aligned}
\begin{tikzpicture}[scale=\diagramscale, thick]
    \begin{pgfonlayer}{foreground}
    \end{pgfonlayer}
    \draw [fill=\fillA] (0.5,3.3)
        to (0.5,1.5)
        to [out=down, in=down, looseness=2] (1.5,2 |- M.north)
        to [out=up, in=down] (2.5,2.0 |- U.south) to (2.5,3.3);
    \draw (0.5,-0.5) to [out=up, in=down] (3.5,1.2) to (3.5,3.3);
    \draw [dashed] (2.0,-1) to (2.0,3.3);
    \node at (1,-0.85) {Alice};
    \node at (3,-0.85) {Bob};
\end{tikzpicture}
\end{aligned}
\end{equation}
The left-hand side of this equation describes the encrypted communication protocol itself, while the right-hand side describes its intended effect. Equating the two represents the assertion that the protocol is correctly implemented. The dashed vertical line, which is not part of the formalism, represents the separation of ownership between Alice and Bob which is of importance to our interpretation.

In these diagrams, regions represent public information, lines represent computational systems, and vertices represent computational processes. In the example above, $E$ represents encryption, a process that consumes private data and publishes it as public data, while $D$ represents a decryption process, which modifies private data in a way that depends on the public data.  Note that this approach differs from the one taken by the theory of information flow~\cite{w05-cae}, where every level of security is a 0-cell.

A key advantage of our scheme is that the interpretation of a computational process depends entirely on its type, which here refers not only to its domain and codomain, but also to the entire local configuration around the vertex in a 2\-dimensional sense. Rules governing the interaction between private and public data are enforced automatically by the formalism, such that impossible or absurd operations --- such as a local system modifying nonlocally-held public data, or making use of data to which it does not have access --- cannot even be expressed. This is a strong form of locality, which is a natural and automatic property of the bicategorical formalism.

Remarkably, the form of the graphical equation~\eqref{eq:introencryptedcommunication} corresponds exactly to that of the equation for \textit{quantum teleportation}, as described in the bicategorical approach to quantum information~\cite{v12-hsqp}. One of the most important procedures in quantum theory, and yet uncovered only relatively recently~\cite{b93-teleport}, quantum teleportation is a procedure whereby two parties who share pre-existing quantum entanglement can transmit a quantum state between them, by only communicating classical information. A strong comparison to classical encrypted communication can be made: two parties who share a pre-agreed secret key can transmit a secret message between them, by only communicating public information. While easily drawn, this analogy between quantum teleportation and classical encrypted communication does not to our knowledge appear in the literature.

Using our bicategorical formalism we are able to take this comparison seriously, developing an abstract categorical description of encryption that makes the analogy mathematically precise. This indicates a close link between quantum and classical information which has not previously been recognized. We can loosely describe this correspondence in the following way:

\vspace{5pt}
\hspace{2pt}
\begin{tabular}{ll}
\bf Classical & \bf Quantum
\\
Private information & Quantum information
\\
Public information & Classical information
\\
Publication & Measurement
\\
One-time pad creation & Entangled state creation
\end{tabular}

\vspace{5pt}
\noindent
Just as the one-time pad is a fundamental resource for encrypted communication, so quantum entanglement is a fundamental resource for quantum teleportation. This paper demonstrates that the relationship is not merely analogous, but mathematically exact, with quantum randomness and classical nondeterminism giving rise to the same  formal structures.

To implement these classical protocols we must choose a bicategory in which to apply our higher syntax. We show that for classical nondeterministic computation, the symmetric monoidal bicategory \cat{2Rel} of matrices of relations provides the correct higher algebraic setting, which we define in detail in Section~\ref{sec:2rel}. Relations provide a standard semantics for nondeterministic computation~\cite{p09-qcs}, and our bicategory builds on this. Solutions to our graphical equations in this bicategory correspond to actual implementation schemes for the protocols in a classical nondeterministic setting. Some degree of nondeterminism is essential; for example, creation of a one-time pad would not be cryptographically useful if the same secret key was created every time.

Having introduced our bicategory \cat{2Rel}, we describe our abstract bicategorical syntax in Sections~\ref{sec:privateinformation} and~\ref{sec:publicinformation}. We apply this to encrypted communication, secret sharing and key exchange procedures in Section~\ref{sec:classicalprocedures}.

\section{A Bicategory of matrices of relations}
\label{sec:2rel}

\subsection{Construction}

\noindent
We now describe the bicategory \cat{2Rel} which will be the target for our constructions. It can be described quite simply in terms of finite sets and partitions: 0-cells are finite sets, 1\-cells are finite sets partitioned by their source and target sets, and 2\-cells are relations getting along with the partitioning. All the structure of a bicategory can be defined quite naturally here. We give a careful definition below, although for must purposes an intuitive understanding of the structure is quite adequate.

The $n$-cells of \cat{2Rel} are defined in the following way. \textbf{0\-cells}  are finite sets, denoted $S, T, \ldots$. A \textbf{1-cell} $A: S \to T$ is a family of finite sets $A_{t,s}$ indexed by $s \in S$ and $t \in T$. For 1-cells $A,B: S \to T$, a \textbf{2-cell} $\rho : A \Rightarrow B$ is a family of relations $\rho_{t,s}: A_{t,s} \to B_{t,s}$ indexed by $s \in S$ and $t \in T$.\ignore{More concisely, we could present a 1-cell $A:S \to T$ as a function $A \to S \times T$ for $A$ some finite set, and a 2-cell $\rho: A \Rightarrow B$ as a relation $A \to B$ making the evident triangle commute.} 

To demonstrate that these form a bicategory, we first observe that for each pair of 0\-cells $S,T$, the 1\-cells \mbox{$S \to T$} and the 2\-cells between them form a category in a straightforward way, using ordinary relational composition. Identity 1\-cells $\id_S : S \to S$ are chosen as the family $\delta _{s,s'}$, which is defined as the 1\-element set if $s=s'$ and the 0\-element set otherwise. Horizontal composition is a family of functors
\begin{equation}
\circ: \Hom(S,T) \times \Hom(T,U) \to \Hom(S,U)
\end{equation}
for each ordered triple $S,T,U$ of 0\-cells. On 1\-cells \mbox{$A:S \to T$} and \mbox{$B: T \to U$}, we define this as
\begin{equation}
\label{eq:horizontalcomposition}
(B \circ A) _{u,s} = \coprod _{t \in T} B_{u,t} \times A_{t,s}.
\end{equation}
This extends to 2\-cells in a natural way.

The final pieces of structure are the structural 2\-cells of the bicategory. For each family of composable 1\-cells $A:S \to T$, $B:T \to U$ and $C:U \to V$ we require an invertible 2\-cell
\begin{align}
\phi _{A,B,C} &: (C \circ B) \circ A \Rightarrow C \circ (B \circ A).
\end{align}
Writing out the source and target using definition~\eqref{eq:horizontalcomposition}, we define $\phi$ as the composite of canonical isomorphisms
\begin{align}
\nonumber
&\textstyle \coprod_{t} \big( ( \coprod _u C_{v,u} \times B _{u,t} ) \times A_{t,s} \big) 
\\
\nonumber
\simeq{} &
\textstyle \coprod_{t} \coprod _u \big( ( C_{v,u} \times B _{u,t} ) \times A_{t,s} \big) 
\\
\nonumber
\simeq{} &
\textstyle \coprod_{u} \coprod _t \big( C_{v,u} \times ( B _{u,t} \times A_{t,s} ) \big) 
\\
\simeq {}&
\textstyle
\coprod_{u} \big( C_{v,u} \times ( \coprod _t B _{u,t} \times A_{t,s} ) \big).
\end{align}
For each 1-cell $A:S \to T$ we also require invertible unit 2\-cells
\begin{align}
\lambda _{A} &: I_T \circ A \to A,
\\
\rho _A &: A \circ I_S \to A.
\end{align}
We define $\lambda_A$ and $\rho_A$ as the obvious isomorphisms
\begin{align}
\textstyle \coprod _{t'} (\id_T)_{t,t'} \times A_{t',s}
&=
\textstyle \coprod _{t'} \delta _{t,t'} \times A_{t',s} \simeq A_{t,s}
\\
\textstyle \coprod _{t'} A _{t,s'} \times (\id_S)_{s',s}
&=
\textstyle \coprod _{t'} A _{t,s'} \times \delta _{s',s} \simeq A_{t,s}
\end{align}
It is then straightforward to show that the required pentagon and triangle equations commute.

The bicategory \cat{2Rel} also has the following property for endomorphisms.
\begin{lemma}
\label{lem:endoinvertible}
In \cat{2Rel}, if 2\-cells $\sigma$ and $\tau$ are endomorphisms, then $\sigma \circ \tau = \id$ implies $\tau \circ \sigma = \id$.
\end{lemma}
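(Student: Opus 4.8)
The plan is to reduce the equation of 2\-cells to an indexed family of equations between ordinary relations, and then to settle each of these with the elementary fact that in a \emph{finite} monoid a one\-sided inverse is automatically two\-sided. Since $\sigma$ and $\tau$ are endomorphism 2\-cells they share a common source and target 1\-cell $A : S \to T$, so each is a family of relations $\sigma_{t,s},\tau_{t,s} : A_{t,s} \to A_{t,s}$. Vertical composition and identities of 2\-cells are computed componentwise, so $(\sigma \circ \tau)_{t,s} = \sigma_{t,s} \circ \tau_{t,s}$ under relational composition and $(\id)_{t,s} = \id_{A_{t,s}}$ is the diagonal relation. Hence the hypothesis $\sigma \circ \tau = \id$ is exactly the collection of equations $\sigma_{t,s} \circ \tau_{t,s} = \id_{A_{t,s}}$, and the conclusion sought is the collection $\tau_{t,s} \circ \sigma_{t,s} = \id_{A_{t,s}}$. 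It therefore suffices to prove, for a fixed index $(t,s)$ with $X = A_{t,s}$, that relations $a,b : X \to X$ with $a \circ b = \id_X$ also satisfy $b \circ a = \id_X$.

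For this I would pass to the monoid $M$ of all binary relations on $X$ under composition, whose unit is $\id_X$, observing that $M$ is finite because 1\-cells in \cat{2Rel} are by definition families of finite sets. Given $a \circ b = \id_X$, the self\-map of $M$ sending $x \mapsto x \circ a$ is injective, since $x \circ a = y \circ a$ forces $x = x \circ (a \circ b) = y \circ (a \circ b) = y$; an injective map of a finite set into itself is surjective, so there is some $c \in M$ with $c \circ a = \id_X$. Then $c = c \circ (a \circ b) = (c \circ a) \circ b = b$, whence $b \circ a = c \circ a = \id_X$, as required.

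The computation is short, and the one point I would flag as essential rather than merely routine is the appeal to finiteness of $M$, which is what turns injectivity into surjectivity. It cannot be omitted: on an infinite set the successor relation and its converse compose to the identity in one order but not the other, so the statement genuinely reflects the finite\-set semantics built into \cat{2Rel} rather than a formal property of relational composition alone.
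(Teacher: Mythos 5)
Your proof is correct. The first half---observing that vertical composition and identities are computed componentwise, so that $\sigma \circ \tau = \id$ is equivalent to the family of equations $\sigma_{t,s} \circ \tau_{t,s} = \id_{A_{t,s}}$ over finite sets---is exactly the reduction the paper performs. Where you genuinely diverge is in the single-set step. The paper argues combinatorially about the relations themselves: for each $x$ there is at least one mediating $y$, there can be only one because off-diagonal pairs are forbidden, and hence $\sigma_{t,s}$ and $\tau_{t,s}$ are graphs of mutually inverse bijections, from which the reversed identity is immediate. You instead embed the problem in the finite monoid $M$ of all relations on $A_{t,s}$ and invoke the general fact that a one-sided inverse in a finite monoid is two-sided, via injectivity of right translation by $a$ and the pigeonhole step turning injectivity into surjectivity. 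Your route is arguably tidier---it sidesteps the paper's rather terse uniqueness argument---and your successor-relation counterexample correctly isolates finiteness of the underlying sets as the hypothesis doing the work. What the paper's version buys in exchange is the stronger structural conclusion that the 2-cells in question are families of bijections, a fact the authors lean on later (for instance when taking converses of invertible 2-cells in the security-property proofs); your argument establishes invertibility without exhibiting that structure, though it can be recovered afterwards since the units of the monoid of relations on a finite set are exactly the permutations.
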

\begin{proof}
Suppose at first that $\sigma$ and $\tau$ are relations on a finite set $S$. Then if $\sigma \circ \tau = \id_S$, there must be at least  one $y \in S$ such that $(x,y) \in \sigma$ and $(y,x) \in \tau$. But then there must be exactly one such $y$, otherwise we could not ensure that $x \neq z \in S$ implies $\not\exists y \in S$ with $(x,y) \in \sigma$ and $(y,x) \in \tau$. It follows that $\sigma$ and $\tau$ are graphs of mutually inverse bijections, and so in particular $\tau \circ \sigma = \id_S$ also.

We now turn to the general case, for which $\sigma,\tau:A \Rightarrow A$ are 2\-cells on some $A: S \to T$. But then $\sigma$ and $\tau$ are defined to be a family of relations $\sigma _{t,s}$ and $\tau_{t,s}$, and the condition $\sigma \circ \tau = \id _A$ reduces to the condition that for all $s\in S$ and $t \in T$, $\sigma _{t,s} \circ \tau_{t,s} = \id _{A_{t,s}}$. By the argument above this implies that $\tau _{t,s} \circ \sigma _{t,s} = \id _{A _{t,s}}$, and hence $\tau \circ \sigma = \id _A$.
\end{proof}

\subsection{Symmetric monoidal structure}

\noindent
In fact, \cat{2Rel} can be given the structure of a symmetric monoidal bicategory, for which the tensor product of two 0\-cells is their cartesian product as sets. For full details see~\cite{s13-ccb}, in which an equivalent bicategory~\cat{Mat(Rel)} is described.  Here, 0\-cells correspond to finite cardinalities, 1\-cells correspond to matrices of sets, and 2\-cells correspond to matrices of relations.  The monoidal structure is the usual tensor product of matrices, also known as the Kr\"onecker product.  The tensor product of an $m\times n$ matrix with an $r\times s$ matrix is an $mr \times ns$ matrix. 

The monoidal unit for this product is the 1-element set in \cat{2Rel}.  This labels the empty region in the graphical calculus.  We can then construct the \emph{scalars}, defined as the category $\Hom(1,1)$, represented in the graphical calculus as lines and boxes on a white background. The scalars of a symmetric monoidal bicategory necessarily form a symmetric monoidal category, which in our case is simply \cat{Rel}, the symmetric monoidal category of finite sets and relations.

In our formalism, regions are labelled by types of public information.  No information is needed to pick the single element of the one-element set, so restricting attention to the scalars implies neglecting all nontrivial public information.  What remains is private computational systems and their dynamics, and so we see that \cat{2Rel} treats purely private computations as arbitrary nondeterministic processes.

\section{Private information}
\label{sec:privateinformation}

\subsection{String diagrams}
We assume that a single, isolated computational system is located at any moment at a single point in space, and so over time its history traces out a line in spacetime:
\begin{equation}
\label{eq:privateinformation}
\begin{aligned}
\begin{tikzpicture}[
    decoration={
        markings,
        mark=at position 0.3 with
            {\node [circle, transform shape, fill=white, draw, inner sep=0pt, minimum width=15pt, rotate=-90] {$a$};},
        mark=at position 0.7 with
            {\node [circle, transform shape, fill=white, draw, inner sep=0pt, minimum width=15pt, rotate=-90] {$b$};},
    }, thick
]
\draw [postaction={decorate}] (0,0) to [out=up, in=down] (-0.5,1.5) to [out=up, in=down] (0.5,3);
\end{tikzpicture}
\end{aligned}
\end{equation}
The vertices $a$ and $b$ represent arbitrary computations that act on the system. We could have many such systems, interacting in a complicated way:
\begin{equation}
\begin{aligned}
\begin{tikzpicture}[thick]
\draw [decoration={
        markings,
        mark=at position 0.2 with
            {\node [circle, transform shape, fill=white, draw, inner sep=0pt, minimum width=15pt, rotate=-90] {$a$};},
        mark=at position 0.7 with
            {\node [circle, transform shape, fill=white, draw, inner sep=0pt, minimum width=15pt, rotate=-90] {$b$};},
    }, postaction={decorate}, thick] (0.5,0) to [out=up, in=down] (-0.5,1.5) to [out=up, in=down] (-0.1,3);
\begin{pgfonlayer}{foreground}
\draw [decoration={
        markings,
        mark=at position 0.0 with
            {\node [circle, transform shape, fill=white, draw, inner sep=0pt, minimum width=15pt, rotate=-90] {$c$};},
        mark=at position 0.7 with
            {\node (d) [circle, transform shape, fill=white, draw, inner sep=0pt, minimum width=15pt, rotate=-70] {$d$};},
    }, postaction={decorate}, thick] (-1.2,0.5) to [out=up, in=down] (0.9,2) to [out=up, in=down] (0.8,3);
\end{pgfonlayer}
\draw (2,0.0) to [out=up, in=-40] (d.center);
\end{tikzpicture}
\end{aligned}
\end{equation}
This diagram describes two pre-existing systems, and a third system which is produced from a computational process $c$ with no input. Two of the systems switch positions without interacting, represented by the crossed worldlines. A process $d$ then takes place, which takes two systems as input and produces one system as output.

These diagrams have already found extensive use in the foundations of computer science and logic~\cite{bs10-rosetta}, and also in the foundations of quantum computing~\cite{ac08-cqm}. They are often called \emph{string diagrams}, and are a rigorous and powerful notation for morphisms in symmetric monoidal categories~\cite{s11-sgl}. Strings correspond to objects of the monoidal category, vertices correspond to morphisms, and placing diagrams side-by-side corresponds to the tensor product operation.

We assume that our string diagrams are valued in \cat{Rel}, the symmetric monoidal category of finite sets and relations. This forms the scalars of \cat{2Rel}, as discussed in Section~\ref{sec:2rel}. We will interpret an object of \cat{Rel} as representing a classical computational system, with a particular finite set of internal states. Morphisms are interpreted as computational dynamics, nondeterministically transforming states of the domain into states of the codomain.

\subsection{Self-dualizability and one-time pads}
A system is called \emph{self-dualizable} if it can be equipped with unit and counit morphisms
\begin{calign}
\label{eq:unitcounit}
\begin{aligned}
\begin{tikzpicture}[yscale=-1, thick]
\draw [white] (0.5,0) to (0.5,1);
\draw (0,0) to [out=up, in=up, looseness=2] (1,0);
\end{tikzpicture}
\end{aligned}
&
\begin{aligned}
\begin{tikzpicture}[thick]
\draw [white] (0.5,0) to (0.5,1);
\draw (0,0) to [out=up, in=up, looseness=2] (1,0);
\end{tikzpicture}
\end{aligned}
\end{calign}
satisfying the following equations, called the snake equations:
\begin{equation}
\label{eq:snakeequations}
\begin{aligned}
\begin{tikzpicture}[scale=\diagramscale, thick]
\draw (0,-0.5) to (0,0.5) to [out=up, in=up, looseness=2] (1,0.5) to [out=down, in=down, looseness=2] (2,0.5) to (2,1.5);
\end{tikzpicture}
\end{aligned}
\quad=\quad
\begin{aligned}
\begin{tikzpicture}[scale=\diagramscale, thick]
\draw (0,-0.5) to (0,1.5);
\end{tikzpicture}
\end{aligned}
\quad=\quad
\begin{aligned}
\begin{tikzpicture}[xscale=-1, scale=\diagramscale, thick]
\draw (0,-0.5) to (0,0.5) to [out=up, in=up, looseness=2] (1,0.5) to [out=down, in=down, looseness=2] (2,0.5) to (2,1.5);
\end{tikzpicture}
\end{aligned}
\end{equation}
We say that the unit and counit morphisms \textit{witness} the self-duality. In \cat{FRel} every object $A$ is self-dualizable, with the unit morphism $\eta: 1 \to A \times A$ given canonically by \mbox{$\eta = \sum_{a \in A} (a,a)$}, and with the counit given by the converse of this relation.

Not every unit and counit map witnessing self-dualizability will be of this form, but they can be characterized in the following way.

\begin{lemma}
In a monoidal category, for a self-dualizable object $A$, there is a bijection between choices of unit and counit morphism, and isomorphisms $A \simeq A$.
\end{lemma}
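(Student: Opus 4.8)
The plan is to realise the set of self-dualities as a torsor over $\mathrm{Aut}(A)$: I will fix one reference duality and show that every other duality is obtained from it by twisting with a unique automorphism. Using self-dualizability, fix a unit and counit $(\eta_0,\epsilon_0)$ witnessing the self-duality, so the two snake equations hold for this pair; I write snake~(i) for $(\id_A\otimes\epsilon_0)\circ(\eta_0\otimes\id_A)=\id_A$ and snake~(ii) for $(\epsilon_0\otimes\id_A)\circ(\id_A\otimes\eta_0)=\id_A$. Throughout I suppress the associators and unit isomorphisms, working in the graphical calculus, which is justified by coherence. I then define a map $\Phi$ from isomorphisms to dualities and a map $\Psi$ back by
\[ \Phi(\phi) = \big(\,(\phi\otimes\id_A)\circ\eta_0,\ \ \epsilon_0\circ(\id_A\otimes\phi^{-1})\,\big), \qquad \Psi(\eta,\epsilon) = (\id_A\otimes\epsilon_0)\circ(\eta\otimes\id_A). \]
Informally, $\Phi$ inserts $\phi$ on the left leg of the reference cup and $\phi^{-1}$ on the right leg of the reference cap, while $\Psi$ reads off an endomorphism of $A$ by capping the unit $\eta$ against the reference counit $\epsilon_0$.

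First I would check that $\Phi(\phi)$ really is a duality, i.e.\ that the twisted unit and counit still satisfy both snake equations. This is a direct computation from the snakes for $(\eta_0,\epsilon_0)$ together with the interchange law: in each snake the factor $\phi$ sits on the surviving strand and $\phi^{-1}$ on the contracted strand, so after applying the reference snake they compose to $\phi\circ\phi^{-1}=\id_A$, and symmetrically for the other snake. Next I would verify that $\Psi(\eta,\epsilon)$ is invertible, hence genuinely lands in $\mathrm{Aut}(A)$, by exhibiting the mirror-image inverse $(\id_A\otimes\epsilon)\circ(\eta_0\otimes\id_A)$ built from the new counit and the reference unit; composing the two expressions and yanking using one snake from each of the two dualities collapses the result to $\id_A$.

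It then remains to show that $\Phi$ and $\Psi$ are mutually inverse. The identity $\Psi\circ\Phi=\id$ is immediate: expanding $\Psi(\Phi(\phi))=(\id_A\otimes\epsilon_0)\circ\big((\phi\otimes\id_A)\eta_0\otimes\id_A\big)$, sliding $\phi$ onto the output strand and applying snake~(i) returns $\phi$. The reverse identity $\Phi\circ\Psi=\id$ is the heart of the argument, and I expect it to be the main obstacle, since it is exactly the yanking step underlying uniqueness of duals. Writing $\phi=\Psi(\eta,\epsilon)$ and expanding $(\phi\otimes\id_A)\circ\eta_0$ one obtains $(\id_A\otimes\epsilon_0\otimes\id_A)\circ(\eta\otimes\eta_0)$; here the inner composite $(\epsilon_0\otimes\id_A)\circ(\id_A\otimes\eta_0)=\id_A$, which is snake~(ii), contracts the reference cup--cap pair and leaves precisely $\eta$. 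Thus $\Phi(\Psi(\eta,\epsilon))$ has the same unit as $(\eta,\epsilon)$, and since in any duality the counit is uniquely determined by the unit (a standard consequence of the snake equations for both counits), the two counits must also agree, so $\Phi(\Psi(\eta,\epsilon))=(\eta,\epsilon)$.

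The only remaining labour is bookkeeping of the suppressed coherence isomorphisms, which can be reinstated mechanically and do not affect the structure of the argument; I regard the genuine content as the two yanks above. Together these show that $\Phi$ and $\Psi$ are inverse bijections between $\mathrm{Aut}(A)$ and the set of unit--counit pairs witnessing self-duality, which is exactly the claimed correspondence.
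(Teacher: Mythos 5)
Your proposal is correct and follows essentially the same route as the paper's proof: you fix a reference duality $(\eta_0,\epsilon_0)$, send an automorphism $\phi$ to the twisted pair obtained by inserting $\phi$ into the reference unit and $\phi^{-1}$ into the reference counit, and send a duality $(\eta,\epsilon)$ to the endomorphism $(\id_A\otimes\epsilon_0)\circ(\eta\otimes\id_A)$ with mirror-image inverse, exactly the two constructions the paper draws graphically. The only difference is that you carry out the verifications (both snakes for the twisted pair, invertibility of $\Psi$, and the two round-trip identities, the last via uniqueness of the counit given the unit) which the paper leaves as ``straightforward''.
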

\begin{proof}
Since $A$ is self-dualizable, we can pick unit and counit morphisms~\eqref{eq:unitcounit} witnessing this. Given a second unit and counit
\begin{calign}
\label{eq:secondunitcounit}
\begin{aligned}
\begin{tikzpicture}[yscale=-1, scale=\diagramscale, thick]
\draw [white] (0.5,0) to (0.5,1);
\draw (0,-0.2) to [] (0,0.5);
\draw (1,-0.2) to [] (1,0.5);
\node [draw, minimum width=1.2cm, minimum height=0.3cm, fill=white, anchor=north] at (0.5,0.5) {};
\end{tikzpicture}
\end{aligned}
&\begin{aligned}
\begin{tikzpicture}[scale=\diagramscale, thick]
\draw [white] (0.5,0) to (0.5,1);
\draw (0,-0.2) to [] (0,0.5);
\draw (1,-0.2) to [] (1,0.5);
\node [draw, minimum width=1.2cm, minimum height=0.3cm, fill=white, anchor=south] at (0.5,0.5) {};
\end{tikzpicture}
\end{aligned}
\end{calign}
also witnessing a self-duality, we can construct the following morphisms of type $A \to A$:
\begin{calign}
\begin{aligned}
\begin{tikzpicture}[scale=\diagramscale, thick]
\draw (0,-0.5) to (0,0.5) to [out=up, in=up, looseness=2] (-1,0.5);
\draw (-2,0.5) to (-2,1.5);
\node [draw, minimum width=1.3cm, minimum height=0.3cm, fill=white, anchor=north] at (-1.5,0.5) {};
\end{tikzpicture}
\end{aligned}
&
\begin{aligned}
\begin{tikzpicture}[scale=\diagramscale, xscale=-1, yscale=-1, thick]
\draw (0,-0.5) to (0,0.5) to [out=up, in=up, looseness=2] (-1,0.5);
\draw (-2,0.5) to (-2,1.5);
\node [draw, minimum width=1.3cm, minimum height=0.3cm, fill=white, anchor=south] at (-1.5,0.5) {};
\end{tikzpicture}
\end{aligned}
\end{calign}
Applying the snake equations~\eqref{eq:snakeequations} it can be shown that these morphisms are inverse to each other. Conversely, given an isomorphism $s:A \to A$, we can form the following unit and counit morphisms:
\begin{calign}
\begin{aligned}
\begin{tikzpicture}[yscale=-1, thick]
\draw [white] (0.5,0) to (0.5,1);
\draw (0,-0.7) to (0,0) node [anchor=south, draw, fill=white, inner sep=0pt, minimum width=17pt, minimum height=12pt] {$s$} to [out=up, in=up, looseness=2] (1,0) to (1,-0.7);
\end{tikzpicture}
\end{aligned}
&
\begin{aligned}
\begin{tikzpicture}[xscale=-1, thick]
\draw [white] (0.5,0) to (0.5,1);
\draw (0,-0.7) to (0,0) node [anchor=north, draw, fill=white, inner sep=0pt, minimum width=17pt, minimum height=12pt] {$s ^{-1}$} to [out=up, in=up, looseness=2] (1,0) to (1,-0.7);
\end{tikzpicture}
\end{aligned}
\end{calign}
It is straightforward to show that these constructions are inverse, so we have a bijection between unit and counit morphisms and automorphisms, as desired.
\end{proof}

In \cat{Rel}, the automorphisms of an object are exactly the bijections. As a result every unit morphism \mbox{$\eta: I \to S \times S$} is of the form $\sum_s (s,\pi(s))$ for some permutation $\pi$ of $S$. That is, the unit morphisms represent nondeterministic processes whereby the first party receives an arbitrary $s \in S$, and the second party receives $\pi(s)$. If the permutation $\pi$ is known, its inverse can be applied by the second party, and both parties will then share matching keys which can be used as a cryptographic resource. So given a self-dualizable object, we can interpret a unit morphism as a \textit{key exchange procedure}. The counit can similarly be interpreted as a \emph{key verification} procedure, which terminates the computation iff the two parties have mismatched keys.

\subsection{Kernels, deletion and random data }
Morphisms of \cat{Rel} can have elements of their domain which are not related to any elements of their codomain. These describe situations where the computation halts. Given a relation \mbox{$\rho:A \to B$}, its \emph{kernel} is a relation $\kappa: K \to A$ such that $\rho \circ \kappa = 0$, the empty relation, and such that $\kappa$ is universal with this property:
\begin{equation}
\begin{aligned}
\begin{tikzpicture}[scale=2]
\node (K) at (0,0) {$K$};
\node (A) at (1,0) {$A$};
\node (B) at (2,0) {$B$};
\node (X) at (0.5,-0.5) {$X$};
\draw [->] (K) to node [above] {$\kappa$} (A);
\draw [->] (X) to node [auto, swap] {$\sigma$} (A);
\draw [->, dashed] (X) to node [auto] {$\tilde \sigma$} (K);
\draw [->] (A) to [out=10, in=170] node [above] {$\rho$} (B);
\draw [->] (A) to [out=-10, in=-170] node [below] {$0$} (B);
\end{tikzpicture}
\end{aligned}
\end{equation}
The universal property is that for all relations $\sigma: X \to A$ with $\rho \circ \sigma = 0$, then $\sigma$ factors through $\kappa$. The morphism $\kappa$ then characterizes the elements of $A$ on which $\rho$ halts. The construction of kernels extends in a similar way to arbitrary 2\-cells in \cat{2Rel}.

For a finite set $A$ there is a unique relation of type $A \to 1$ that has zero kernel. We interpret this as a process that eliminates the system $A$, without halting the computation. We denote this graphically in the following way:
\begin{equation}
\begin{aligned}
\begin{tikzpicture}[thick]
\draw (0,0) to (0,1) node [Vertex] {};
\end{tikzpicture}
\end{aligned}
\end{equation}
The converse process represents the nondeterministic preparation of a system in an arbitrary, `random'  state:
\begin{equation}
\begin{aligned}
\begin{tikzpicture}[yscale=-1, thick]
\draw (0,0) to (0,1) node [Vertex] {};
\end{tikzpicture}
\end{aligned}
\end{equation}
These are related by the unit and counit morphisms~\eqref{eq:unitcounit} witnessing self-dualizability via the following equations:
\begin{calign}
\label{eq:twistdelete1}
\begin{aligned}
\begin{tikzpicture}[scale=\diagramscale, thick]
\draw (0,0) node [Vertex] {} to [out=down, in=left, looseness=1.2] (0.5,-0.6) to [out=right, in=down, looseness=1.2] (1,0) to (1,1);
\end{tikzpicture}
\end{aligned}
\quad=\quad
\begin{aligned}
\begin{tikzpicture}[scale=\diagramscale, thick]
\draw [white] (-0.05,-0.6) to +(0.1,0);
\draw (0,0) node [Vertex] {} to (0,1);
\end{tikzpicture}
\end{aligned}
\quad=\quad
\begin{aligned}
\begin{tikzpicture}[scale=\diagramscale, xscale=-1, thick]
\draw (0,0) node [Vertex] {} to [out=down, in=left, looseness=1.2] (0.5,-0.6) to [out=right, in=down, looseness=1.2] (1,0) to (1,1);
\end{tikzpicture}
\end{aligned}
\\[10pt]
\label{eq:twistdelete2}
\begin{aligned}
\begin{tikzpicture}[scale=\diagramscale, yscale=-1, thick]
\draw (0,0) node [Vertex] {} to [out=down, in=left, looseness=1.2] (0.5,-0.6) to [out=right, in=down, looseness=1.2] (1,0) to (1,1);
\end{tikzpicture}
\end{aligned}
\quad=\quad
\begin{aligned}
\begin{tikzpicture}[scale=\diagramscale, yscale=-1, thick]
\draw [white] (-0.05,-0.6) to +(0.1,0);
\draw (0,0) node [Vertex] {} to (0,1);
\end{tikzpicture}
\end{aligned}
\quad=\quad
\begin{aligned}
\begin{tikzpicture}[scale=\diagramscale, xscale=-1, yscale=-1, thick]
\draw (0,0) node [Vertex] {} to [out=down, in=left, looseness=1.2] (0.5,-0.6) to [out=right, in=down, looseness=1.2] (1,0) to (1,1);
\end{tikzpicture}
\end{aligned}
\end{calign}
Each of these has a natural interpretation in terms of nondeterministic classical computation: the equalities~\eqref{eq:twistdelete1} say that if you nondeterministically create shared keys and then delete one of the keys, the remaining key is uniformly random; while the equalities~\eqref{eq:twistdelete2} say that if you have a given key, it is always possible that another key produced nondeterministically might match it.

\section{Public information}
\label{sec:publicinformation}

\subsection{Graphical calculus}
We now consider a  graphical notation for correlation between many computational systems. Already explored in the context of quantum information~\cite{v12-hsqp}, here  we investigate its applications to classical information processing for the first time. Consider a family of systems carrying private data, existing simultaneously  without interacting. We can draw this straightforwardly in our string diagram notation as follows:
\def\amplitude{0.3}
\begin{equation}
\begin{aligned}
\begin{tikzpicture}
\pgfmathsetseed{1}
\foreach \x in {0.0,0.1,...,1.0}
{
    \pgfmathsetmacro{\inrand}{1+(rnd-0.5)*\amplitude}
    \pgfmathsetmacro{\outrand}{1+(rnd-0.5)*\amplitude}
    \draw (\x,0) to [out=up, in=down, in looseness=\inrand, out looseness=\outrand]
        (\x+0.4,2);
}
\end{tikzpicture}
\end{aligned}
\end{equation}
Each vertical line represents a separate computational system.

Now suppose that all of these systems hold the same information, in a completely redundant way. Inventing a new notation, we indicate this by shading the effective 2\-dimensional area swept out by the worldlines of our systems:
\begin{equation}
\begin{aligned}
\begin{tikzpicture}[]
\draw [fill=\fillA, draw=none] (0,0) to [out=up, in=down] (0.4,2) to (1.4,2) to [out=down, in=up] (1,0) to (0,0);
\pgfmathsetseed{5}
\foreach \x in {0.1,0.2,...,0.9}
{
    \pgfmathsetmacro{\inrand}{1+(rnd-0.5)*\amplitude}
    \pgfmathsetmacro{\outrand}{1+(rnd-0.5)*\amplitude}
    \draw (\x,0) to [out=up, in=down, in looseness=\inrand, out looseness=\outrand]
        (\x+0.4,2);
}
\draw (0,0) to [out=up, in=down]
    (0+0.4,2);
\draw (1,0) to [out=up, in=down]
    (1+0.4,2);
\end{tikzpicture}
\end{aligned}
\qquad\leadsto\qquad
\begin{aligned}
\begin{tikzpicture}[, thick]
\draw [fill=\fillA, draw=none] (0,0) to [out=up, in=down] (0.4,2) to (1.4,2) to [out=down, in=up] (1,0) to (0,0);
\foreach \x/\y in {0.0/1, 1.0/1}
{
\draw (\x,0) to [out=up, in=down, looseness=\y] (\x+0.4,2);
}
\end{tikzpicture}
\end{aligned}
\end{equation}
We have presented this as nothing more than a notational convenience. But in fact, if we include these regions formally as elements of our notation, we obtain precisely the graphical notation for a bicategory. So our richer formalism has a rigorous mathematical foundation, extending that of our original notation.

We interpret these 2-dimensional areas as representing \textit{public information}, contrasting with the 1-dimensional lines in Section~\ref{sec:privateinformation} representing private information. Private information is held at a single point in space, and can be controlled or manipulated however its owner desires. Public information can be accessed at any point on its worldsheet, but cannot be modified by local actions, since it is held redundantly over a finite spatial region. So public information is more accessible, but as a consequence less mutable.

This can be considered an abstraction of real public information storage systems, such as the Domain Name Service, which stores public information redundantly on many independent computers. This makes the data easier to access, since it is more likely there will be a copy of the data nearby that can be consulted. But the downside is that information update is no longer a local operation: complex algorithms are required to synchronize the information held by the individual computers. It would be interesting to consider whether an extension of our formalism could address these issues of distributed computation.

Since we are thinking intuitively of  public information as formed from a large collection of correlated systems, it makes sense that we should be able to copy the public information by splitting this family of systems into two parts, and delete the information by deleting each constituent system. We denote these operations in the following way:
\def\boxwidth{2cm}
\begin{align}
\label{eq:copypublicdata}
\makebox[\boxwidth]{$\begin{aligned}
\begin{tikzpicture}
\pgfmathsetseed{1}
\draw [fill=\fillA, draw=none] (-1.05,-0.0) rectangle (0.95,1.8);
\draw [fill=white, draw=none] (-0.55,1.8) to (-0.55,1.5) to [out=down, in=down, looseness=2] (0.45,1.5) to (0.45,1.8);
\foreach \x in {0.0,0.1,...,1.0}
{
    \pgfmathsetmacro{\inrand}{1+(rnd-0.5)*\amplitude}
    \pgfmathsetmacro{\outrand}{2.0+(rnd-0.5)*\amplitude}
    \draw (\x,0.0) to [out=up, in=down, in looseness=\inrand, out looseness=\outrand]
        (0.45+0.5*\x,1.5) to +(0,0.3);
    \pgfmathsetmacro{\inrand}{1+(rnd-0.5)*\amplitude}
    \pgfmathsetmacro{\outrand}{2.0+(rnd-0.5)*\amplitude}
    \draw (-\x-0.1,0) to [out=up, in=down, in looseness=\inrand, out looseness=\outrand]
        (-0.55-0.5*\x,1.5) to +(0,0.3);
}
\end{tikzpicture}
\end{aligned}$}
\qquad&\leadsto\qquad
\makebox[\boxwidth]{$\begin{aligned}
\begin{tikzpicture}[, thick]
\pgfmathsetseed{1}
\draw [fill=\fillA, draw=none] (-1.05,-0.0) rectangle (0.95,1.8);
\draw [fill=white] (-0.55,1.8) to (-0.55,1.5) to [out=down, in=down, looseness=2] (0.45,1.5) to (0.45,1.8);
\end{tikzpicture}
\end{aligned}$}
\\
\label{eq:deletepublicdata}
\setlength{\vertexradius}{0.8pt}
\makebox[\boxwidth]{$\begin{aligned}
\begin{tikzpicture}
\draw [fill=\fillA, draw=none, name path=curve] (0,0) to [out=up, in=down] (0.0,1) to [out=up, in=up, looseness=1] (1.0,1) to [out=down, in=up] (1,0);
\pgfmathsetseed{6}
\foreach \x in {0.1,0.2,...,0.9}
{
    \pgfmathsetmacro{\srand}{rand*0.02}
    \draw [draw=none, name path=straight] (\x,0) to [out=up, in=down]
        (\x+\srand,1.6);
    \draw [name intersections={of=curve and straight}] (\x,0) to (intersection-1) node [Vertex] {};
}
\draw (0,0) to [out=up, in=down]
    (0+0.0,1) node [Vertex] {};
\draw (1,0) to [out=up, in=down]
    (1+0.0,1) node [Vertex] {};
\end{tikzpicture}
\end{aligned}$}
\qquad&\leadsto\qquad
\makebox[\boxwidth]{$\begin{aligned}
\begin{tikzpicture}[, thick]
\draw [fill=\fillA, name path=curve] (0,0) to [out=up, in=down] (0.0,1) to [out=up, in=up, looseness=1.4] (1.0,1) to [out=down, in=up] (1,0);
\end{tikzpicture}
\end{aligned}$}
\end{align}
On the left-hand side is the intuitive picture in terms of families of perfectly correlated computational systems, and on the right-hand side is the formal component of our graphical calculus which represents it. We can also form the converses of these operations:
\begin{align}
\label{eq:comparepublicdata}
\makebox[\boxwidth]{$\begin{aligned}
\begin{tikzpicture}[yscale=-1]
\pgfmathsetseed{1}
\draw [fill=\fillA, draw=none] (-1.05,-0.0) rectangle (0.95,1.8);
\draw [fill=white, draw=none] (-0.55,1.8) to (-0.55,1.5) to [out=down, in=down, looseness=2] (0.45,1.5) to (0.45,1.8);
\foreach \x in {0.0,0.1,...,1.0}
{
    \pgfmathsetmacro{\inrand}{1+(rnd-0.5)*\amplitude}
    \pgfmathsetmacro{\outrand}{2.0+(rnd-0.5)*\amplitude}
    \draw (\x,0.0) to [out=up, in=down, in looseness=\inrand, out looseness=\outrand]
        (0.45+0.5*\x,1.5) to +(0,0.3);
    \pgfmathsetmacro{\inrand}{1+(rnd-0.5)*\amplitude}
    \pgfmathsetmacro{\outrand}{2.0+(rnd-0.5)*\amplitude}
    \draw (-\x-0.1,0) to [out=up, in=down, in looseness=\inrand, out looseness=\outrand]
        (-0.55-0.5*\x,1.5) to +(0,0.3);
}
\end{tikzpicture}
\end{aligned}$}
\qquad&\leadsto\qquad
\makebox[\boxwidth]{$\begin{aligned}
\begin{tikzpicture}[yscale=-1, thick]
\pgfmathsetseed{1}
\draw [fill=\fillA, draw=none] (-1.05,-0.0) rectangle (0.95,1.8);
\draw [fill=white] (-0.55,1.8) to (-0.55,1.5) to [out=down, in=down, looseness=2] (0.45,1.5) to (0.45,1.8);
\end{tikzpicture}
\end{aligned}$}
\\[5pt]
\label{eq:createpublicdata}
\setlength{\vertexradius}{0.8pt}
\makebox[\boxwidth]{$\begin{aligned}
\begin{tikzpicture}[yscale=-1]
\draw [fill=\fillA, draw=none, name path=curve] (0,0) to [out=up, in=down] (0.0,1) to [out=up, in=up, looseness=1] (1.0,1) to [out=down, in=up] (1,0);
\pgfmathsetseed{6}
\foreach \x in {0.1,0.2,...,0.9}
{
    \pgfmathsetmacro{\srand}{rand*0.02}
    \draw [draw=none, name path=straight] (\x,0) to [out=up, in=down]
        (\x+\srand,1.6);
    \draw [name intersections={of=curve and straight}] (\x,0) to (intersection-1) node [Vertex] {};
}
\draw (0,0) to [out=up, in=down]
    (0+0.0,1) node [Vertex] {};
\draw (1,0) to [out=up, in=down]
    (1+0.0,1) node [Vertex] {};
\end{tikzpicture}
\end{aligned}$}
\qquad&\leadsto\qquad
\makebox[\boxwidth]{$\begin{aligned}
\begin{tikzpicture}[yscale=-1, thick]
\draw [fill=\fillA, name path=curve] (0,0) to [out=up, in=down] (0.0,1) to [out=up, in=up, looseness=1.4] (1.0,1) to [out=down, in=up] (1,0);
\end{tikzpicture}
\end{aligned}$}
\end{align}
The first of these represents the process of comparing two pieces of public data. In the case that the values are different, this cannot be successful and we might expect the computation to halt, which will be demonstrated  by the concrete relational model we examine below. The second represents the creation of public data in a nondeterministic uniform fashion.

\subsection{Topological axioms}
As with the bicategorical syntax for quantum information~\cite{v12-hsqp}, in order to support their interpretations, we require these copying, deleting, comparison and uniform creation components to satisfy certain equations. They are topological, in that they amount to saying that any composite diagram is determined only by its connectivity.
{\def\diagramscale{0.5}
\begin{align}
\label{eq:topological1}
\begin{aligned}
\begin{tikzpicture}[scale=\diagramscale, thick]
\draw [use as bounding box, draw=none] (-0.5,0) rectangle (2.3,-2);
\draw [white] (-0.5,0) to (3.1,-2);
\draw [fill=\fillA, draw=none] (-0.5,0) to (0.3,0) to (0.3,-1)
    to [out=down, in=down, looseness=1.5] (1.3,-1)
    to [out=up, in=up, looseness=1.5] (2.3,-1)
    to (2.3,-2) to (-0.5,-2);
\draw [] (0.3,0) to (0.3,-1)
    to [out=down, in=down, looseness=1.5] (1.3,-1)
    to [out=up, in=up, looseness=1.5] (2.3,-1)
    to (2.3,-2);
\end{tikzpicture}
\end{aligned}
\hspace{5pt}&=\hspace{5pt}
\begin{aligned}
\begin{tikzpicture}[scale=\diagramscale, thick]
\draw [use as bounding box, draw=none] (1,0) rectangle (0,2);
\draw [white] (0,0) to (2,2);
\draw [fill=\fillA, draw=none] (0,0)
    to (1,0)
    to (1,2)
    to (0,2);
\draw [] (1,0) to (1,2);
\end{tikzpicture}
\end{aligned}
&
\begin{aligned}
\begin{tikzpicture}[scale=\diagramscale, thick]
\draw [use as bounding box, draw=none] (-2.3,0) rectangle (0.5,2);
\draw [white] (0.5,0) to (-3.1,2);
\draw [fill=\fillA, draw=none] (0.5,0) to (-0.3,0) to (-0.3,1)
    to [out=up, in=up, looseness=1.5] (-1.3,1)
    to [out=down, in=down, looseness=1.5] (-2.3,1)
    to (-2.3,2) to (0.5,2);
\draw [] (-0.3,0) to (-0.3,1)
    to [out=up, in=up, looseness=1.5] (-1.3,1)
    to [out=down, in=down, looseness=1.5] (-2.3,1)
    to (-2.3,2);
\end{tikzpicture}
\end{aligned}
\hspace{5pt}&=\hspace{5pt}
\begin{aligned}
\begin{tikzpicture}[scale=\diagramscale, thick]
\draw [use as bounding box, draw=none] (-1,0) rectangle (0,2);
\draw [white] (0,0) to (-2,2);
\draw [fill=\fillA, draw=none] (0,0)
    to (-1,0)
    to (-1,2)
    to (-0,2);
\draw [] (-1,0) to (-1,2);
\end{tikzpicture}
\end{aligned}
\end{align}
\begin{align}
\label{eq:topological2}
\begin{aligned}
\begin{tikzpicture}[scale=\diagramscale]
\draw [use as bounding box, draw=none] (-0.5,0) rectangle (2.3,2);
\draw [white] (-0.5,0) to (3.1,2);
\draw [fill=\fillA, draw=none] (-0.5,0) to (0.3,0) to (0.3,1)
    to [out=up, in=up, looseness=1.5] (1.3,1)
    to [out=down, in=down, looseness=1.5] (2.3,1)
    to (2.3,2) to (-0.5,2);
\draw [thick] (0.3,0) to (0.3,1)
    to [out=up, in=up, looseness=1.5] (1.3,1)
    to [out=down, in=down, looseness=1.5] (2.3,1)
    to (2.3,2);
\end{tikzpicture}
\end{aligned}
\hspace{5pt}&=\hspace{5pt}
\begin{aligned}
\begin{tikzpicture}[scale=\diagramscale]
\draw [use as bounding box, draw=none] (1,0) rectangle (0,2);
\draw [white] (0,0) to (2,2);
\draw [fill=\fillA, draw=none] (0,0)
    to (1,0)
    to (1,2)
    to (0,2);
\draw [thick] (1,0) to (1,2);
\end{tikzpicture}
\end{aligned}
&
\begin{aligned}
\begin{tikzpicture}[scale=\diagramscale]
\draw [use as bounding box, draw=none] (-2.3,0) rectangle (0.5,-2);
\draw [white] (0.5,0) to (-3.1,-2);
\draw [fill=\fillA, draw=none] (0.5,0) to (-0.3,0) to (-0.3,-1)
    to [out=down, in=down, looseness=1.5] (-1.3,-1)
    to [out=up, in=up, looseness=1.5] (-2.3,-1)
    to (-2.3,-2) to (0.5,-2);
\draw [thick] (-0.3,0) to (-0.3,-1)
    to [out=down, in=down, looseness=1.5] (-1.3,-1)
    to [out=up, in=up, looseness=1.5] (-2.3,-1)
    to (-2.3,-2);
\end{tikzpicture}
\end{aligned}
\hspace{5pt}&=\hspace{5pt}
\begin{aligned}
\begin{tikzpicture}[scale=\diagramscale]
\draw [use as bounding box, draw=none] (-1,0) rectangle (0,2);
\draw [white] (0,0) to (-2,2);
\draw [fill=\fillA, draw=none] (0,0)
    to (-1,0)
    to (-1,2)
    to (0,2);
\draw [thick] (-1,0) to (-1,2);
\end{tikzpicture}
\end{aligned}
\end{align}
\def\diagramscale{0.4}
\begin{align}
\label{eq:commplanar}
\begin{aligned}
\begin{tikzpicture}[scale=\diagramscale]
\draw [fill=\fillA, draw=none] (0,0)
    to [out=up, in=down, out looseness=1.5] (2,2)
    to (3,2)
    to [out=down, in=up, out looseness=1.5] (1,0);
\draw [fill=\fillA, draw=none] (3,0)
    to [out=up, in=down, out looseness=1.5] (1,2)
    to (0,2)
    to [out=down, in=up, out looseness=1.5] (2,0);
\draw [fill=\fillA, draw=none] (0,2)
    to (1,2)
    to [out=up, in=up, looseness=1.5] (2,2)
    to (3,2)
    to [out=up, in=down] (2,4)
    to (1,4) to [out=down, in=up] (0,2);
\draw [thick] (0,0)
    to [out=up, in=down, out looseness=1.5] (2,2)
    to [out=up, in=up, looseness=1.5] (1,2)
    to [out=down, in=up, in looseness=1.5] (3,0);
\draw [thick] (1,0)
    to [out=up, in=down, in looseness=1.5] (3,2)
    to [out=up, in=down] (2,4);
\draw [thick] (2,0)
    to [out=up, in=down, in looseness=1.5] (0,2)
    to [out=up, in=down] (1,4);
\end{tikzpicture}
\end{aligned}
\hspace{5pt}&=\hspace{5pt}
\begin{aligned}
\begin{tikzpicture}[scale=\diagramscale]
\draw [fill=\fillA, draw=none] (0,0)
    to [out=up, in=down, out looseness=1.5] (0,2)
    to (1,2)
    to [out=down, in=up, out looseness=1.5] (1,0);
\draw [fill=\fillA, draw=none] (3,0)
    to [out=up, in=down, out looseness=1.5] (3,2)
    to (2,2)
    to [out=down, in=up, out looseness=1.5] (2,0);
\draw [fill=\fillA, draw=none] (0,2)
    to (1,2)
    to [out=up, in=up, looseness=1.5] (2,2)
    to (3,2)
    to [out=up, in=down] (2,4)
    to (1,4) to [out=down, in=up] (0,2);
\draw [thick] (3,0)
    to [out=up, in=down, in looseness=1.5] (3,2)
    to [out=up, in=down] (2,4);
\draw [thick] (0,0)
    to [out=up, in=down, in looseness=1.5] (0,2)
    to [out=up, in=down] (1,4);
\draw [thick] (2,0) 
    to (2,2)
    to [out=up, in=up, looseness=1.5] (1,2)
    to (1,0);
\end{tikzpicture}
\end{aligned}
\,\,&\,\,
\begin{aligned}
\begin{tikzpicture}[scale=\diagramscale, yscale=-1]
\draw [fill=\fillA, draw=none] (0,0)
    to [out=up, in=down, out looseness=1.5] (2,2)
    to (3,2)
    to [out=down, in=up, out looseness=1.5] (1,0);
\draw [fill=\fillA, draw=none] (3,0)
    to [out=up, in=down, out looseness=1.5] (1,2)
    to (0,2)
    to [out=down, in=up, out looseness=1.5] (2,0);
\draw [fill=\fillA, draw=none] (0,2)
    to (1,2)
    to [out=up, in=up, looseness=1.5] (2,2)
    to (3,2)
    to [out=up, in=down] (2,4)
    to (1,4) to [out=down, in=up] (0,2);
\draw [thick] (0,0)
    to [out=up, in=down, out looseness=1.5] (2,2)
    to [out=up, in=up, looseness=1.5] (1,2)
    to [out=down, in=up, in looseness=1.5] (3,0);
\draw [thick] (1,0)
    to [out=up, in=down, in looseness=1.5] (3,2)
    to [out=up, in=down] (2,4);
\draw [thick] (2,0)
    to [out=up, in=down, in looseness=1.5] (0,2)
    to [out=up, in=down] (1,4);
\end{tikzpicture}
\end{aligned}
\hspace{5pt}&=\hspace{5pt}
\begin{aligned}
\begin{tikzpicture}[scale=\diagramscale, yscale=-1]
\draw [fill=\fillA, draw=none] (0,0)
    to [out=up, in=down, out looseness=1.5] (0,2)
    to (1,2)
    to [out=down, in=up, out looseness=1.5] (1,0);
\draw [fill=\fillA, draw=none] (3,0)
    to [out=up, in=down, out looseness=1.5] (3,2)
    to (2,2)
    to [out=down, in=up, out looseness=1.5] (2,0);
\draw [fill=\fillA, draw=none] (0,2)
    to (1,2)
    to [out=up, in=up, looseness=1.5] (2,2)
    to (3,2)
    to [out=up, in=down] (2,4)
    to (1,4) to [out=down, in=up] (0,2);
\draw [thick] (3,0)
    to [out=up, in=down, in looseness=1.5] (3,2)
    to [out=up, in=down] (2,4);
\draw [thick] (0,0)
    to [out=up, in=down, in looseness=1.5] (0,2)
    to [out=up, in=down] (1,4);
\draw [thick] (2,0) 
    to (2,2)
    to [out=up, in=up, looseness=1.5] (1,2)
    to (1,0);
\end{tikzpicture}
\end{aligned}
\end{align}

\vspace{-8pt}
\begin{equation}
\label{eq:copycompare}
\begin{aligned}
\begin{tikzpicture}[scale=\diagramscale, scale=1.5]
\draw [fill=\fillA, draw=none, use as bounding box] (0.5,0.5) rectangle (2.5,2.5);
\draw [fill=white, thick] (1,1.5)    to [out=up, in=up, looseness=1.5] (2,1.5)    to [out=down, in=down, looseness=1.5] (1,1.5);
\end{tikzpicture}
\end{aligned}
\hspace{5pt}=\hspace{2pt}
\begin{aligned}
\begin{tikzpicture}[scale=\diagramscale, scale=1.5]
\draw [fill=\fillA, draw=none, use as bounding box] (0.5,0.5) rectangle (2.5,2.5);
\end{tikzpicture}
\end{aligned}
\end{equation}}Each of these equations is consistent with the interpretation we give to the basic components~\eqref{eq:copypublicdata}--\eqref{eq:createpublicdata}. For example, the first equality labelled~\eqref{eq:topological1} represents the fact that copying public information and then deleting the new copy results in the identity; the first equality labelled~\eqref{eq:commplanar} represents the fact that exchanging public information and then comparing gives the same result as simply comparing; and equation~\eqref{eq:copycompare} states that copying public information and then immediately comparing yields the identity.

The following theorem demonstrates that these structures are easy to work with in \cat{2Rel}.
\begin{theorem}
Every 0\-cell in \cat{2Rel} carries structures \eqref{eq:copypublicdata}--\eqref{eq:createpublicdata} satisfying equations~\eqref{eq:topological1}--\eqref{eq:copycompare} in an essentially unique way.
\end{theorem}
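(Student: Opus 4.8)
The plan is to treat this as a combined existence-and-uniqueness statement, and first of all to identify what algebraic structure the four components actually encode. Reading a blue region as the $0$-cell $S$ sitting in the white background $I$, its two boundaries are $1$-cells $\iota\colon I\to S$ and $\epsilon\colon S\to I$, and a horizontal time-slice through a single region is their composite $G:=\epsilon\circ\iota\colon I\to I$, an object of the scalar category $\Hom(I,I)\cong\cat{Rel}$. Under this reading copy \eqref{eq:copypublicdata} is a $2$-cell $G\Rightarrow G\circ G$, delete \eqref{eq:deletepublicdata} is $G\Rightarrow\id_I$, and compare \eqref{eq:comparepublicdata} and create \eqref{eq:createpublicdata} are their vertical reflections, which at the level of \cat{Rel} are the relational converses. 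My first step is therefore to observe that equations \eqref{eq:topological1} and \eqref{eq:topological2} are the (co)unit laws, \eqref{eq:commplanar} is (co)commutativity, and \eqref{eq:copycompare} is the special (separability) law, so that the data is precisely a special commutative dagger-Frobenius algebra on $G$ in \cat{Rel}; the associativity and Frobenius laws are the content of the diagrams being determined by connectivity. The remaining task splits into existence and uniqueness.

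For existence I would exhibit the canonical diagonal structure on each finite set $S$ and verify it directly. Here $G=S$, copy is the diagonal relation $s\mapsto(s,s)$, delete is the total relation $S\to 1$ --- exactly the zero-kernel deletion singled out in the section on kernels --- and compare and create are the converses, namely the partial diagonal $(s,s)\mapsto s$, which halts on mismatched inputs in accordance with the stated interpretation, and the uniform nondeterministic creation $*\mapsto s$. Checking \eqref{eq:topological1}--\eqref{eq:copycompare} then reduces, via the explicit horizontal-composition formula \eqref{eq:horizontalcomposition}, to elementary identities of relations on $S$, so this step is routine.

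Uniqueness is where the real work lies, and I expect it to be the main obstacle. The axioms \eqref{eq:topological1}--\eqref{eq:copycompare} by themselves do not pin the structure down: special commutative dagger-Frobenius algebras in \cat{Rel} are classified by disjoint unions of abelian groups, so on a set of size $\ge 2$ there are genuinely inequivalent solutions, for instance the group algebra of $\mathbb{Z}/2$ alongside the diagonal one. The extra constraint that breaks this degeneracy is that delete must be the \emph{non-halting} deletion, that is the relation of zero kernel, so that its underlying relation $G\to 1$ is total. In the groupoid classification, totality of the counit forces every element to be a unit, hence every block to be trivial, collapsing the classification to the discrete groupoid, which is exactly the diagonal structure. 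I would either quote this classification or, to keep the argument self-contained, run the element-chase directly: totality of delete together with the counit law forces copy to be a total relation, and commutativity together with specialness and the Frobenius law then force it to be single-valued and idempotent in the sense that pins it to the diagonal.

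The one subtlety in making ``essentially unique'' precise is the invertibility of the comparison $2$-cell intertwining an arbitrary solution with the canonical one, and this is exactly where Lemma~\ref{lem:endoinvertible} is designed to be used: the special and Frobenius laws naturally produce an endomorphism equipped with a one-sided inverse, and the lemma upgrades this to a genuine isomorphism, so that any two solutions are canonically isomorphic. The residual freedom in the boundary $1$-cells $\iota,\epsilon$, which are themselves unique only up to isomorphism, accounts for the word ``essentially,'' and tracking it through the argument completes the proof.
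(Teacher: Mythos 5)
Your reading of the components as a cocommutative/commutative special Frobenius-type structure, and your existence argument via the canonical diagonal structure, are consistent with the paper. The gap is in the uniqueness step. You flatten the data to a special commutative dagger-Frobenius algebra on the scalar $G=\epsilon\circ\iota$ in $\cat{Rel}$, invoke the abelian-groupoid classification, conclude that the stated axioms underdetermine the structure, and repair the theorem by adding a totality hypothesis on the delete map. But that flattening discards exactly the bicategorical typing the theorem exploits. The copy and compare cells are not arbitrary $2$-cells $G\Rightarrow G\circ G$ and $G\circ G\Rightarrow G$: as drawn, the white notch forces them to be whiskerings $1_\epsilon\circ\eta'\circ 1_\iota$ and $1_\epsilon\circ\epsilon'\circ 1_\iota$ of $2$-cells $\eta'\colon\id_S\Rightarrow\iota\circ\epsilon$ and $\epsilon'\colon\iota\circ\epsilon\Rightarrow\id_S$, and equations \eqref{eq:topological1}--\eqref{eq:topological2} are precisely the zigzag identities exhibiting $\iota$ and $\epsilon$ as ambidextrous adjoints. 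This is the paper's route: a $1$-cell $A\colon 1\to S$ is a family of finite sets $(A_s)$, determined up to isomorphism by the cardinalities $|A_s|$; it always has an ambidextrous adjoint, and the induced Frobenius algebra lives on $\coprod_s A_s\times A_s$ with the disjoint-union-of-pair-groupoids multiplication. Your $\mathbb{Z}/2$ example is a genuine special commutative dagger-Frobenius algebra in $\cat{Rel}$, but it is not of this form: the only way to present a two-element carrier as $\coprod_s A_s\times A_s$ uses two singleton components, whose pair-groupoid multiplication is the discrete one, not the group law. So the counterexample never type-checks against the bicategorical data.

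Consequently no extra axiom is needed: a disjoint union of pair groupoids is commutative exactly when each $A_s$ is a singleton, so \eqref{eq:commplanar} already collapses the classification to a single isomorphism class of $1$-cells, \eqref{eq:copycompare} then holds automatically, and totality of delete is a consequence rather than a hypothesis. Your proposed use of Lemma~\ref{lem:endoinvertible} to build the comparison isomorphism also does not match the paper, which gets essential uniqueness directly from the fact that the isomorphism class of $A$ is determined by the cardinalities $|A_s|$; that lemma is reserved for the later result on invertibility of decryption. If you want to salvage your $\cat{Rel}$-level argument, the honest statement is that the theorem's structures correspond, under the abelian-groupoid classification, precisely to the \emph{discrete} groupoids --- which is exactly the remark the paper makes immediately after the proof.
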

\begin{proof}[Proof sketch]
A 1\-cell $A:1 \to S$ is determined by an $S$-indexed family of finite sets $A_s$, and its isomorphism class is determined by the cardinalities of those sets. Every such 1\-cell has an ambidextrous adjoint, meaning precisely that values can be given for structures \eqref{eq:copypublicdata}--\eqref{eq:createpublicdata} that satisfy equations \eqref{eq:topological1}--\eqref{eq:topological2}. The result is a Frobenius algebra structure~\cite{l06-faaa}, which will be commutative exactly when each of the finite sets $A_s$ has cardinality 1, which satisfies the equations labelled~\eqref{eq:commplanar}. The resulting structures automatically satisfy equation~\eqref{eq:copycompare}.
\end{proof}

\noindent
Indeed, such a structure in \cat{2Rel} gives rise to a commutative dagger-Frobenius algebra in \cat{Rel}, corresponding to a discrete groupoid with respect to the classification of such structures as abelian groupoids~\cite{p09-qcs, hcc12-rfa}. This suggests an expansion of our formalism to the case where objects of the bicategory are arbitrary abelian groupoids. It would be interesting to consider what procedures in classical information might be naturally modelled by such an extended formalism.

\subsection{Interacting private and public data}

\noindent
Interesting phenomena arise when we study interactions between public and private information. There are three basic forms that such an interaction can take: converting private data to public data; converting public data to private data; and using public data to modify private data.

Conversion processes between public and private data take the following forms:
\begin{calign}
\begin{aligned}
\begin{tikzpicture}[scale=\diagramscale, thick]
    \begin{pgfonlayer}{foreground}
    \node (M) [minimum width=\componentwidth, minimum height=18pt, fill=white, draw=black, thick, anchor=center] at (1.0,1.0) {$P$};
    \end{pgfonlayer}
    \draw [fill=\fillA] (0.5,2.5)
        to (0.5,1.0)
        to (1.5,1.0)
        to (1.5,2.5);
    \draw (1,-0.5) to (1,1.0);
\end{tikzpicture}
\end{aligned}
&
\begin{aligned}
\begin{tikzpicture}[scale=\diagramscale, thick, yscale=-1]
    \begin{pgfonlayer}{foreground}
    \node (M) [minimum width=\componentwidth, minimum height=18pt, fill=white, draw=black, thick, anchor=center] at (1.0,1.0) {$S$};
    \end{pgfonlayer}
    \draw [fill=\fillA] (0.5,2.5)
        to (0.5,1.0)
        to (1.5,1.0)
        to (1.5,2.5);
    \draw (1,-0.5) to (1,1.0);
\end{tikzpicture}
\end{aligned}
\end{calign}
Here $P$ is a publication process converting private data into public data, and $S$ is a sampling process converting public data into private data. Their interpretations rests entirely on their types; there are no equations which we require them to satisfy. These processes need not be deterministic, or invertible, in general. We could also allow them to have a kernel, meaning that the computation will halt on some inputs.

The final type of process we introduce is the controlled computation, which performs an operation on private data depending on the value of some public data:
\begin{equation}
\begin{aligned}
\begin{tikzpicture}[scale=\diagramscale, thick]
    \begin{pgfonlayer}{foreground}
    \node (M) [minimum width=\componentwidth, minimum height=18pt, fill=white, draw=black, thick, anchor=center] at (1.0,1.0) {$C$};
    \end{pgfonlayer}
    \draw [fill=\fillA, draw=none] (0.5,2.5) rectangle (-0.5,-0.5);
    \draw (0.5,2.5)
        to (0.5,1.0)
        to (1.5,1.0)
        to (1.5,2.5);
    \draw (0.5,-0.5) to (0.5,1.5);
    \draw (1.5,1.6 |- M.south)
        to (1.5,-0.5);
\end{tikzpicture}
\end{aligned}
\end{equation}
Such an operation can modify the private data, but not the public data.
\begin{lemma}
A controlled computation cannot modify public data.
\end{lemma}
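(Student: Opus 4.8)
The plan is to read off the bicategorical type of the controlled computation $C$ from its diagram and then argue directly from the definition of 2\-cells in \cat{2Rel}. The public region is a 0\-cell $S$, sitting to the left of the box and continuing uninterrupted from below $C$ to above it; the private worldlines are 1\-cells valued in the monoidal unit $1$. Reading across the bottom and top of the box, the source and target of $C$ are therefore 1\-cells $f, g : 1 \to S$, each a horizontal composite of the boundary 1\-cell of the public region with the incoming (respectively outgoing) private worldlines. Thus $C$ is a 2\-cell $C : f \Rightarrow g$ whose only nontrivial boundary 0\-cell is the public type $S$.

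The core of the argument is then immediate from the construction of \cat{2Rel}. By definition a 2\-cell $C : f \Rightarrow g$ between 1\-cells $f, g : 1 \to S$ is a family of relations $C_{s} : f_{s} \to g_{s}$ indexed by $s \in S$, and each component $C_s$ relates states only within the single index $s$. I would identify the \emph{value} of the public data with exactly this index $s \in S$: a public region of type $S$ carrying value $s$ corresponds to the $s$\-th component of the boundary data. Since no component of $C$ connects distinct indices $s \neq s'$, the 2\-cell can never send data labelled by the public value $s$ to data labelled by a different value $s'$. Writing the boundary 1\-cell as $A : 1 \to S$ with each $A_s$ a one\-element set, as guaranteed by the Theorem, a controlled computation with incoming private system $P$ and outgoing private system $P'$ has components $C_s : A_s \times P \to A_s \times P'$, i.e.\ for each fixed public value $s$ an arbitrary relation on the private data alone. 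The public value $s$ is preserved on the nose, which is precisely the assertion that $C$ cannot modify the public data.

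The main obstacle is not the relational bookkeeping, which is routine, but pinning down the type of $C$ from the picture so that the indexing set $S$ really is a boundary 0\-cell of the 2\-cell. Concretely, I would check that the public region is never bounded or terminated by the box --- it passes by on the left --- so that its boundary 1\-cell $A$ occurs as a common leftmost factor of both $f$ and $g$, and $C$ restricts to the identity 2\-cell $\id_A$ on that factor. This is the topological content underlying the relational statement: because public information is held redundantly over the whole region while the local vertex $C$ touches only the private worldlines, the induced action on the public boundary is forced to be trivial, and hence the public data survives unchanged.
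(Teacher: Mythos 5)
Your argument is correct, but it takes a genuinely different route from the paper. The paper's proof stays entirely at the abstract diagrammatic level: it uses the copying structure \eqref{eq:copypublicdata} and the topological moves \eqref{eq:topological1} to deform the picture so that the public region is explicitly split before $C$ acts, with $C$ touching only one branch of the copy; the surviving branch then visibly carries the public data to the top unmodified. That argument is model\-independent --- it holds in any bicategory carrying the structures of Section~\ref{sec:publicinformation}, and transfers verbatim to the quantum setting of~\cite{v12-hsqp}. You instead unwind the definition of 2\-cells in \cat{2Rel}: a 2\-cell between 1\-cells $1 \to S$ is \emph{by construction} a family of relations indexed diagonally by $s \in S$, and since the public value is exactly this index, no 2\-cell of that type can change it. This is more elementary, makes the mechanism completely transparent in the concrete semantics, and in fact yields something slightly stronger --- \emph{any} composite whose overall boundary has the region $S$ running from bottom to top is diagonal in $s$, whether or not it is a single controlled vertex --- at the price of tying the conclusion to the particular model \cat{2Rel} rather than to the axioms the paper is advertising. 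One caution on wording: your phrase that $C$ ``restricts to the identity $\id_A$ on that factor'' should not be read as asserting a factorization of $C$ as a whiskering of $\id_A$ with a single 2\-cell on the private strands --- that would say $C$ is \emph{un}controlled. The correct statement, which your main argument does establish, is only that the family $(C_s)$ never mixes distinct indices, while each component $C_s$ may still depend on $s$.
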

\begin{proof}
We can use the topological behaviour of public information to rewrite our controlled computation vertex $C$ in the following way:
\begin{equation}
\begin{aligned}
\begin{tikzpicture}[scale=\diagramscale, thick]
    \begin{pgfonlayer}{foreground}
    \node (M) [minimum width=\componentwidth, minimum height=18pt, fill=white, draw=black, thick, anchor=center] at (1.0,1.0) {$C$};
    \end{pgfonlayer}
    \draw [fill=\fillA, draw=none] (0.5,2.5) rectangle (-0.5,-0.5);
    \draw (0.5,2.5)
        to (0.5,1.0)
        to (1.5,1.0)
        to (1.5,2.5);
    \draw (0.5,-0.5) to (0.5,1.5);
    \draw (1.5,1.6 |- M.south)
        to (1.5,-0.5);
\end{tikzpicture}
\end{aligned}
\quad=\quad
\begin{aligned}
\begin{tikzpicture}[scale=\diagramscale, thick]
    \draw [fill=\fillA, draw=none] (1.5,-2.75)
        to (1.5,-0.9)
        to [out=up, in=up, looseness=2] (0.5,-0.9)
        to (0.5,-1.6)
        to [out=down, in=down, looseness=2] (-0.5,-1.6)
        to (-0.5,0.25)
        to (-1.5,0.25)
        to (-1.5,-2.75);
    \draw [] (1.5,-2.75)
        to (1.5,-0.9)
        to [out=up, in=up, looseness=2] (0.5,-0.9)
        to (0.5,-1.6)
        to [out=down, in=down, looseness=2] (-0.5,-1.6)
        to (-0.5,0.25);
\draw (2.5,-2.75) to (2.5,0.25);
    \node (M) [minimum width=\componentwidth, minimum height=18pt, fill=white, draw=black, thick, anchor=center] at (2,-1.25) {$C$};
\end{tikzpicture}
\end{aligned}
\end{equation}
In this form it is clear that the public data is not modified, since it is explicitly copied before $C$ is implemented.
\end{proof}

\noindent
This result fits well with our intuition about public data as a being carried by a large, correlated family of systems. To change the value of the public data would require modifying all of these systems, but the process $C$ only has access to a restricted subset, as made explicit by the open boundary on the left-hand side of the diagram.

\section{Modelling cryptographic procedures}
\label{sec:classicalprocedures}

\subsection{Encrypted communication}

Suppose Alice is sending an encrypted message to Bob.  We use a 2-cell $E$ to represent Alice's encryption process, which relates the private plaintext $P$ and the private key $K$ to the public ciphertext $C$:
\begin{equation}
\begin{aligned}
\begin{tikzpicture}[scale=\diagramscale, thick]
      \begin{pgfonlayer}{foreground}
      \node (M) [minimum width=\componentwidth, minimum height=18pt, fill=white, draw=black, thick, anchor=center] at (1.0,1.0) {$E$};
      \end{pgfonlayer}
      \draw [fill=\fillA] (0.5,2.5)
            node [above] {$C$}
          to (0.5,1.0)
          to (1.5,1.0)
          to (1.5,2.5)
            node [above] {$C$};
      \node (P) [below] at (0.5,-0.5) {$P$};
      \draw (P) to (0.5,1.5);
      \node (K) [below] at (1.5,-0.5) {$K$};
      \draw (K) to (1.5,1.5);
  \end{tikzpicture}
  \end{aligned}
\end{equation}

\noindent
Similarly, we represent Bob's decryption process $D$ as a 2-cell that relates the public ciphertext and private key to the same ciphertext and a private plaintext.
\begin{equation}
\begin{aligned}
  \begin{tikzpicture}[scale=\diagramscale, thick]
      \begin{pgfonlayer}{foreground}
      \node (M) [minimum width=\componentwidth, minimum height=18pt, fill=white, draw=black, thick, anchor=center] at (1.0,1.0) {$D$};
      \end{pgfonlayer}
      \draw [fill=\fillA, draw=none] (0.5,2.5) rectangle (-0.5,-0.5);
      \begin{pgfonlayer}{foreground}
      \end{pgfonlayer}
      \node (P) [above] at (1.5,2.5) {$P$};
      \draw (0.5,2.5)
          to (0.5,1.0)
          to (1.5,1.0)
          to (P);
      \draw (0.5,-0.5) to (0.5,1.5);
      \node (K) [below] at (1.5,-0.5) {$K$};
      \draw (1.5,1.6 |- M.south)
          to (K);
      \node [above] at (0.5,2.5) {$C$};
      \node [below] at (0.5,-0.5) {$C$};
  \end{tikzpicture}
  \end{aligned}
\end{equation}

\noindent
Encryption and decryption are deterministic; key generation is not.  We represent key generation as a special 2-cell, the curried identity relation on the set of keys $K$.
\begin{center}
  \begin{tikzpicture}[scale=\diagramscale, thick]
    \draw (0,0.5) node [above] {$K$} to [out=down,in=down, looseness=2] (1,0.5) node [above] {$K$};
  \end{tikzpicture}
\end{center}
This is the unit morphism for a self-duality on $K$, as described in Section~\ref{sec:privateinformation}.

Using our topological language, we can express correctness of encrypted communication in the following way:
\begin{equation}
\label{eq:encryptedcommunication}
\begin{aligned}
\begin{tikzpicture}[scale=\diagramscale, thick]
    \begin{pgfonlayer}{foreground}
    \node (U) [minimum width=\componentwidth, minimum height=18pt, fill=white, draw=black, thick] at (3,2.5) {$D$};
    \node (M) [minimum width=\componentwidth, minimum height=18pt, fill=white, draw=black, thick, anchor=center] at (1.0,1.0) {$E$};
    \end{pgfonlayer}
    \draw [fill=\fillA] (0.5,3.3)
        to (0.5,1.0)
        to (1.5,1.0) to (1.5,2 |- M.north)
        to [out=up, in=down] (2.5,2.0 |- U.south) to (2.5,3.3);
    \draw (0.5,-0.5) to (0.5,1.5);
    \draw (1.5,1.6 |- M.south)
        to [out=down, in=down, looseness=2] (2.5,0.5 |- M.south) to [out=up, in=down] (3.5,2 |- U.south)
        to (3.5,3.3);
    \draw [dashed] (2.0,-1) to (2.0,3.3);
    \node at (1,-0.85) {Alice};
    \node at (3,-0.85) {Bob};
\end{tikzpicture}
\end{aligned}
\quad=\quad
\begin{aligned}
\begin{tikzpicture}[scale=\diagramscale, thick]
    \begin{pgfonlayer}{foreground}
    \end{pgfonlayer}
    \draw [fill=\fillA] (0.5,3.3)
        to (0.5,1.5)
        to [out=down, in=down, looseness=2] (1.5,2 |- M.north)
        to [out=up, in=down] (2.5,2.0 |- U.south) to (2.5,3.3);
    \draw (0.5,-0.5) to [out=up, in=down] (3.5,1.2) to (3.5,3.3);
    \draw [dashed] (2.0,-1) to (2.0,3.3);
    \node at (1,-0.85) {Alice};
    \node at (3,-0.85) {Bob};
\end{tikzpicture}
\end{aligned}
\end{equation}
This is the same 2\-dimensional equation as that used in \cite{v12-hsqp} to describe quantum teleportation. The encryption step takes the place of the measurement operation, and the decryption step takes the place of the controlled unitary correction.  The ciphertext takes the place of the classical bits transmitted from Alice to Bob.  This provides an intuition for why no faster-than-light communication is possible with entangled particles: Alice and Bob merely share a quantum variant of a one-time pad, and the actual encoded message must still be sent at some finite speed.

The simplest nontrivial implementation of this protocol is the encrypted communication of a single bit. We can describe concretely the values of $E$, $D$ and the key creation step~$\eta$  as 2\-cells in \cat{2Rel} which correspond to this scenario. We choose $C=P=K$ to be the 2\-element set, and the 2\-cells take the following values:
\begin{align}
E &= \Bigg(\begin{pmatrix}1 & 0 & 0 & 1 \\ 0 & 1 & 1 & 0\end{pmatrix} \Bigg)
\\
D &= \begin{pmatrix}
\begin{pmatrix}
1 & 0 \\ 0 & 1
\end{pmatrix}
\\
\begin{pmatrix}
0 & 1 \\ 1 & 0
\end{pmatrix}
\end{pmatrix}
\\
\eta &= \Big( \begin{pmatrix} 1 & 0 & 0 & 1 \end{pmatrix} \Big)
\end{align}
Here $E$ is a matrix containing a single relation from a 4\-element set to a 2\-element set, which is exactly the multiplication operation for the group $\mathbb{Z}_2$; $D$ is matrix of invertible single-bit operations to apply depending on which bit is published at the encryption step; and $\eta$ is a matrix with a single entry, the relation representing nondeterministic creation of the pair of keys $(0,0)$ or $(1,1)$. Using the definition of the bicategory \cat{2Rel}, it can be checked that these values satisfy equation~\eqref{eq:encryptedcommunication}.

However, our formalism allows us to carry out an analysis of the protocol in its abstract form, and hence draw conclusions which will apply to any particular implementation. To focus on its algebraic properties, we can simplify equation~\eqref{eq:encryptedcommunication} topologically in the following way:
\begin{equation}
\label{eq:simplifiedencryptedcommunication}
\begin{aligned}
\begin{tikzpicture}[scale=\diagramscale, thick]
    \begin{pgfonlayer}{foreground}
    \node (U) [minimum width=\componentwidth, minimum height=18pt, fill=white, draw=black, thick] at (2,2.0) {$D$};
    \node (M) [minimum width=\componentwidth, minimum height=18pt, fill=white, draw=black, thick, anchor=center] at (1.0,1.0) {$E$};
    \end{pgfonlayer}
    \draw [fill=\fillA] (0.5,3)
        to (0.5,1.0)
        to (1.5,1.0)
        to (1.5,3);
    \draw (0.5,-0.75) to (0.5,1.5);
    \draw (1.5,1.6 |- M.south)
        to [out=down, in=up, looseness=1.5] (1.5,0.5)
        to [out=down, in=down, looseness=2] (2.5,0.5)
        to (2.5,3.0);
\end{tikzpicture}
\end{aligned}
\quad=\quad
\begin{aligned}
\begin{tikzpicture}[scale=\diagramscale, thick]
    \begin{pgfonlayer}{foreground}
    \end{pgfonlayer}
    \draw [fill=\fillA] (0.5,3.0)
        to (0.5,1.5)
        to [out=down, in=down, looseness=2] (1.5,1.5)
        to (1.5,3.0);
    \draw (2.5,-0.75) to (2.5,3.0);
\end{tikzpicture}
\end{aligned}
\end{equation}
We can describe a variety of security properties in a graphical way. Here is the first, which is the primary security property for encrypted communication:
\begin{equation}
\label{eq:security1}
\begin{aligned}
\begin{tikzpicture}[scale=\diagramscale, thick]
    \begin{pgfonlayer}{foreground}
    \node (M) [minimum width=\componentwidth, minimum height=18pt, fill=white, draw=black, thick, anchor=center] at (1.0,1.0) {$E$};
    \end{pgfonlayer}
    \draw [fill=\fillA] (0.5,2)
        to (0.5,1.0)
        to (1.5,1.0)
        to (1.5,2);
    \draw (0.5,-0.5) to (0.5,1.5);
    \draw (1.5,1.6 |- M.south)
        to [out=down, in=up, looseness=1.5] (1.5,0.5)
        to [out=down, in=down, looseness=2] (2.5,0.5)
        to (2.5,1.0) node [Vertex] {};
\end{tikzpicture}
\end{aligned}
\quad=\quad
\begin{aligned}
\begin{tikzpicture}[scale=\diagramscale, thick]
    \begin{pgfonlayer}{foreground}
    \end{pgfonlayer}
    \draw [fill=\fillA] (0.5,2)
        to (0.5,1.5)
        to [out=down, in=down, looseness=2] (1.5,1.5)
        to (1.5,2);
    \draw (1,-0.5) to (1,0.2) node [Vertex] {};
\end{tikzpicture}
\end{aligned}
\end{equation}
This says that if we encrypt a message using one copy of a one-time pad, and then delete the other copy of the one-time pad, this is equivalent to deleting our original message and producing a random ciphertext. So in particular, deleting the key causes the original message to be unrecoverable. This also ensures that the whole space of possible keys is being used.

We can use our formalism to derive from this security property a strong constraint on the encryption operation $E$.
\begin{theorem}
If the encryption step in classical encrypted communication satisfies property~\eqref{eq:security1}, then encryption is not invertible unless the space of messages is trivial.
\end{theorem}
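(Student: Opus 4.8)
The plan is to translate the graphical security property~\eqref{eq:security1} into a concrete statement about the relation underlying $E$ in \cat{2Rel}, and then run a short cardinality argument. By the concrete description of \cat{2Rel}, the 2-cell $E$ is carried by a single relation $E \subseteq (P \times K) \times C$, which I read as $E : P \times K \to C$. First I would unwind the left-hand side of~\eqref{eq:security1}: we create a key pair with $\eta$, feed one copy into $E$ together with the plaintext, and delete the other copy. Since deletion is the total relation $K \to 1$ and $\eta$ ranges over all key values, the left-hand side computes, as a relation $P \to C$, the existential projection
\[
L = \{(p,c) : \exists k \in K,\ (p,k,c) \in E\}.
\]

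Next I would unwind the right-hand side, which deletes the plaintext and creates a uniformly random ciphertext; in \cat{Rel} this composite is the full relation $R = P \times C$. Equating the two sides therefore makes~\eqref{eq:security1} equivalent to the surjectivity condition $(S)$: for every $p \in P$ and every $c \in C$ there exists some $k \in K$ with $(p,k,c) \in E$. This reformulation is the crux, after which the statement becomes elementary.

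To finish, I would record that invertible 2-cells in \cat{2Rel} are exactly matrices of relations each of which is the graph of a bijection, so $E$ invertible means $E$ is the graph of a bijection $\beta : P \times K \to C$; in particular $|P|\cdot|K| = |C|$. Fix $p \in P$ and consider $\beta_p : K \to C$, $k \mapsto \beta(p,k)$. Since $\beta$ is injective, $\beta_p$ is injective; condition $(S)$ says precisely that $\beta_p$ is surjective. Hence $\beta_p$ is a bijection, so $|K| = |C|$. Combining with $|C| = |P|\cdot|K|$ gives $|P| = 1$, so the message space is trivial; contrapositively, if $|P| > 1$ then $E$ is not invertible. As a consistency check, the single-bit implementation has $E$ the XOR map $\mathbb{Z}_2 \times \mathbb{Z}_2 \to \mathbb{Z}_2$, which satisfies $(S)$ yet is two-to-one, exactly as the theorem predicts for a nontrivial message space.

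The main obstacle I expect is the translation step, not the algebra. One has to justify carefully that the graphical left-hand side really collapses to the existential projection $L$, correctly handling the choice of unit $\eta$ (a nontrivial permutation is harmless here, since we delete the copy matching the one fed into $E$), and that the right-hand ``create public data'' operation genuinely yields the full relation $P \times C$ rather than some weighted or partial relation. Once~\eqref{eq:security1} has been correctly converted into $(S)$, the remainder is a one-line pigeonhole count.
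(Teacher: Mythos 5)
Your proof is correct, but it takes a genuinely different route from the paper's. You externalize everything into concrete relations: you rewrite property~\eqref{eq:security1} as the condition that for all $p\in P$ and $c\in C$ there is a key $k$ with $(p,k,c)\in E$ (your translation of both sides is right --- deletion is the total relation, the surviving leg of the key-creation cup ranges over all of $K$ by~\eqref{eq:twistdelete1} whatever the underlying permutation, and random public-data creation is the full relation $1\to C$), and then run a pigeonhole count against the fact that an invertible 2\-cell in \cat{2Rel} is a matrix of graphs of bijections. The paper instead stays entirely inside the graphical calculus: it composes both sides of~\eqref{eq:security1} with $E^{-1}$, observes that the left side becomes the identity on the message wire tensored with a random key while the right side visibly factors through the monoidal unit, and concludes that $\id_P$ factors through the one-element set. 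The paper's argument is shorter and model-independent --- it survives in any semantics where the topological axioms hold and the identity on a nontrivial object cannot factor through the unit --- whereas yours is specific to \cat{2Rel} but more informative, showing in passing that $|K|=|C|$ and that each fibre $k\mapsto\beta(p,k)$ would have to be a bijection. One shared blind spot: both arguments silently assume $K\neq\emptyset$ (you divide by $|K|$; the paper needs $\mathrm{del}_K\circ\mathrm{rand}_K=\id_1$ to strip off the spare key wire), but this degenerate case is harmless.
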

\begin{proof}
Suppose encryption is invertible. Then composing both sides of~\eqref{eq:security1} with $E ^{-1}$ gives the following graphical expression:
\begin{equation}
\begin{aligned}
\begin{tikzpicture}[scale=\diagramscale, thick]
    \draw (0.5,-0.5) to (0.5,2.7);
    \draw (1.5,2.7)
        to [out=down, in=up, looseness=1.5] (1.5,0.5)
        to [out=down, in=down, looseness=2] (2.5,0.5)
        to (2.5,1.0) node [Vertex] {};
\end{tikzpicture}
\end{aligned}
\quad=\quad
\begin{aligned}
\begin{tikzpicture}[scale=\diagramscale, thick]
    \begin{pgfonlayer}{foreground}
    \node (M) [minimum width=\componentwidth, minimum height=18pt, fill=white, draw=black, thick, anchor=south] at (1.0,1.5) {$E ^{-1}$};
    \end{pgfonlayer}
    \draw [fill=\fillA] (0.5,2)
        to (0.5,1.5)
        to [out=down, in=down, looseness=2] (1.5,1.5)
        to (1.5,2);
    \draw (1,-0.5) to (1,0.2) node [Vertex] {};
    \draw (0.5,2) to (0.5,2.7);
    \draw (1.5,2) to (1.5,2.7);
\end{tikzpicture}
\end{aligned}
\end{equation}
Hence the identity process on the set of messages factors through the one-element set.
\end{proof}

We can draw a quite different conclusion for the decryption process $D$.
\begin{theorem}
In classical encrypted communication, the decryption step is invertible.
\end{theorem}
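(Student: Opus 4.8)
The plan is to exploit the type of the decryption vertex $D$. Because $D$ relates the public ciphertext and private key to the \emph{same} ciphertext and a private plaintext, the earlier lemma that a controlled computation cannot modify public data applies: its proof shows that $D$ copies the public ciphertext and uses it to select an operation on the private system. Concretely, in \cat{2Rel} this means $D$ is determined by a family of relations $D_c \colon K \to P$ on the private system, one for each value $c$ of the public ciphertext. Any inverse must likewise preserve the public data, so $D$ is invertible precisely when every $D_c$ is, with inverse the controlled computation assembled from the converse relations $D_c^{-1}$. I would therefore reduce the theorem to the claim that each $D_c$ is a bijection.

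To establish this I would feed correctness into the decomposition. Fixing the public ciphertext to a value $c$ in the simplified correctness equation~\eqref{eq:simplifiedencryptedcommunication}, and using the one-time pad $\eta$ to identify Bob's key with Alice's, the equation collapses to the statement that $D_c$, composed with the corresponding section of the encryption $E$ at ciphertext value $c$ (twisted by the pad permutation), is the identity on the message space. Since these are relations between finite sets, the same finiteness argument as in Lemma~\ref{lem:endoinvertible} shows that a composite equal to $\id$ forces both relations to be graphs of mutually inverse bijections. Hence each $D_c$ is a bijection, $K$ and $P$ have equal cardinality, and $D$ is invertible.

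The main obstacle is the second step: rigorously extracting the \emph{local} identity from the \emph{global} topological equation. This requires correctly tracking the public ciphertext wire as it is copied and passed through $D$, and using the self-duality of the key together with the snake equations~\eqref{eq:snakeequations} to turn the pad correlation into the statement that Bob's key equals a fixed permutation of the key Alice encrypted with. An alternative, fully diagrammatic route avoids components altogether: one proposes as a candidate inverse the vertex obtained by bending the wires of $E$ through the key's unit and counit, verifies $D^{-1} \circ D = \id$ directly from~\eqref{eq:simplifiedencryptedcommunication} and the topological axioms, and then appeals to Lemma~\ref{lem:endoinvertible} once more (the two composites are endomorphisms) to obtain the reverse composite for free. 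Either way, the crux is that correctness pins the decryption, at each ciphertext value, to an invertible relation on a finite set.
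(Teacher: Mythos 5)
Your proposal is correct and matches the paper's argument: the ``alternative, fully diagrammatic route'' you sketch --- bend the wires of $E$ through the key's unit and counit to produce a candidate one-sided inverse, verify the composite is the identity using the correctness equation~\eqref{eq:simplifiedencryptedcommunication} and the topological axioms, then invoke Lemma~\ref{lem:endoinvertible} to upgrade a one-sided inverse of an endomorphism-like 2\-cell to a two-sided one --- is exactly the proof given in the paper. Your primary componentwise route (fixing a ciphertext value $c$ and showing each relation $D_c$ is a bijection by the finiteness argument underlying Lemma~\ref{lem:endoinvertible}) is the same proof read in coordinates in \cat{2Rel} rather than a genuinely different one.
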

\begin{proof}
From equation~\eqref{eq:simplifiedencryptedcommunication} representing correctness of encrypted communication, we apply the topological properties of public information to obtain the following equivalent equation:
\begin{equation}
\label{eq:encryptedcommunication2}
\begin{aligned}
\begin{tikzpicture}[scale=\diagramscale, thick]
    \begin{pgfonlayer}{foreground}
    \node (U) [minimum width=\componentwidth, minimum height=18pt, fill=white, draw=black, thick] at (2,2.0) {$D$};
    \node (M) [minimum width=\componentwidth, minimum height=18pt, fill=white, draw=black, thick, anchor=center] at (1.0,1.0) {$E$};
    \end{pgfonlayer}
    \draw [fill=\fillA, draw=none] (1.5,3)
        to (1.5,1.0)
        to (0.5,1)
        to (0.5,1.5)
        to [out=up, in=up, looseness=2] (-0.5,1.5)
        to (-0.5,-0.75)
        to (-1.5,-0.75)
        to (-1.5,3);
    \draw [] (1.5,3)
        to (1.5,1.0)
        to (0.5,1)
        to (0.5,1.5)
        to [out=up, in=up, looseness=2] (-0.5,1.5)
        to (-0.5,-0.75);
    \draw (0.5,-0.75) to (0.5,1.5);
    \draw (1.5,1.6 |- M.south)
        to [out=down, in=up, looseness=1.5] (1.5,0.5)
        to [out=down, in=down, looseness=2] (2.5,0.5)
        to (2.5,3.0);
\end{tikzpicture}
\end{aligned}
\quad=\quad
\begin{aligned}
\begin{tikzpicture}[scale=\diagramscale, thick]
    \begin{pgfonlayer}{foreground}
    \end{pgfonlayer}
    \draw [fill=\fillA, draw=none] (0.5,3.0)
        to (0.5,-0.75)
        to (1.5,-0.75)
        to (1.5,3.0);
    \draw (2.5,-0.75) to (2.5,3.0);
    \draw (1.5,3) to (1.5,-0.75);
\end{tikzpicture}
\end{aligned}
\end{equation}
This says that $D$ has a right inverse given by $E$ with its top-left and bottom-right legs twisted in the manner indicated. However, by Theorem~\ref{lem:endoinvertible}, if an endomorphism is a left inverse then it must also be a right inverse, and hence our theorem follows, with the following expression for $D ^{-1}$:
\begin{equation}
\label{eq:dinverse}
\begin{aligned}
\begin{tikzpicture}[scale=\diagramscale, thick]
    \begin{pgfonlayer}{foreground}
    \end{pgfonlayer}
    \draw [fill=\fillA, draw=none] (0.5,2.75)
        to (0.5,-0.75)
        to (1.5,-0.75)
        to (1.5,2.75);
    \draw (2.5,-0.75) to (2.5,2.75);
    \draw (1.5,2.75) to (1.5,-0.75);
    \node (U) [minimum width=\componentwidth, minimum height=18pt, fill=white, draw=black, thick] at (2,1.0) {$D ^{-1}$};
\end{tikzpicture}
\end{aligned}
\quad=\quad
\begin{aligned}
\begin{tikzpicture}[scale=\diagramscale, thick]
    \begin{pgfonlayer}{foreground}
    \node (M) [minimum width=\componentwidth, minimum height=18pt, fill=white, draw=black, thick, anchor=center] at (1.0,1.0) {$E$};
    \end{pgfonlayer}
    \draw [fill=\fillA, draw=none] (1.5,2.75)
        to (1.5,1.0)
        to (0.5,1)
        to (0.5,1.5)
        to [out=up, in=up, looseness=2] (-0.5,1.5)
        to (-0.5,-0.75)
        to (-1.5,-0.75)
        to (-1.5,2.75);
    \draw [] (1.5,2.75)
        to (1.5,1.0)
        to (0.5,1)
        to (0.5,1.5)
        to [out=up, in=up, looseness=2] (-0.5,1.5)
        to (-0.5,-0.75);
    \draw (0.5,-0.75) to (0.5,1.5);
    \draw (1.5,1.6 |- M.south)
        to [out=down, in=up, looseness=1.5] (1.5,0.5)
        to [out=down, in=down, looseness=2] (2.5,0.5)
        to (2.5,2.75);
\end{tikzpicture}
\end{aligned}
\end{equation}
\end{proof}

\noindent
It follows that we can reconstruct $E$ from the knowledge of $D$ and its inverse.
\begin{theorem}
For an implementation of classical encrypted communication, we have
\begin{equation}
\label{eq:EintermsofDinv}
\begin{aligned}
\begin{tikzpicture}[scale=\diagramscale, thick]
    \begin{pgfonlayer}{foreground}
    \end{pgfonlayer}
    \draw (2.5,-0.75) to (2.5,2.75);
    \draw (1.5,2.75) to (1.5,-0.75);
    \draw [fill=\fillA] (1.5,2.75) to (1.5,1) to (2.5,1) to (2.5,2.75);
    \node (U) [minimum width=\componentwidth, minimum height=18pt, fill=white, draw=black, thick] at (2,1.0) {$E$};
\end{tikzpicture}
\end{aligned}
\quad=\quad
\begin{aligned}
\begin{tikzpicture}[scale=\diagramscale, thick, yscale=-1]
    \begin{pgfonlayer}{foreground}
    \node (M) [minimum width=\componentwidth, minimum height=18pt, fill=white, draw=black, thick, anchor=center] at (1.0,1.0) {$D ^{-1}$};
    \end{pgfonlayer}
    \draw [] (1.5,2.75)
        to (1.5,1.0);
    \draw [fill=\fillA] (0.5,-0.75)
        to (0.5,1.5)
        to [out=up, in=up, looseness=2] (-0.5,1.5)
        to (-0.5,-0.75);
    \draw (1.5,1.6 |- M.south)
        to [out=down, in=up, looseness=1.5] (1.5,0.5)
        to [out=down, in=down, looseness=2] (2.5,0.5)
        to (2.5,2.75);
\end{tikzpicture}
\end{aligned}
\end{equation}
\end{theorem}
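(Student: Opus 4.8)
The plan is to read equation~\eqref{eq:dinverse} as already giving the answer in disguise: it expresses $D^{-1}$ as the encryption cell $E$ with its key leg and one ciphertext boundary bent around, using the cap witnessing the self-duality on the private key $K$ together with the Frobenius cup and cap structure~\eqref{eq:copypublicdata}--\eqref{eq:createpublicdata} on the public ciphertext $C$. Bending a wire around a cup or cap is an invertible operation --- the snake equations guarantee that bending back undoes it --- so I would simply solve~\eqref{eq:dinverse} for $E$ by bending the same two legs in the opposite direction.

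Concretely, I would attach to both sides of~\eqref{eq:dinverse} the cup on $K$ and the cup on the public region $C$ that straighten the two bent legs. On the side carrying $E$, the newly attached cup composes with the cap already present to form a single ``S''-shaped detour on each of the $K$-line and the $C$-region; the snake equation~\eqref{eq:snakeequations} collapses the private $K$ detour to a straight wire, while the topological relations~\eqref{eq:topological1}--\eqref{eq:topological2} collapse the public $C$ detour to a straight boundary. What remains is exactly $E$ in its natural orientation, namely the left-hand side of~\eqref{eq:EintermsofDinv}. On the other side, the same two cups reshape $D^{-1}$ into the bent configuration drawn on the right-hand side of~\eqref{eq:EintermsofDinv}, and the claimed equality follows.

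The routine part is the planar isotopy; the part that needs care is the bookkeeping around the public region. I would have to check that the cup I attach to the ciphertext boundary is precisely the one whose composite with the existing public cap is annihilated by the ambidextrous-adjunction (Frobenius) axioms --- which hold for every $0$-cell of \cat{2Rel} by the earlier theorem on~\eqref{eq:copypublicdata}--\eqref{eq:createpublicdata} --- and that the overall vertical reflection used to draw $D^{-1}$ on the right-hand side of~\eqref{eq:EintermsofDinv} is consistent with the orientation of these dualities. Notably the argument never asks $E$ to be invertible (indeed it is not, by the security theorem); it uses only the existence of $D^{-1}$, which the previous theorem supplies, so no further hypotheses are needed.
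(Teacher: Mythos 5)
Your proposal is correct and matches the paper's own proof: the paper likewise takes expression~\eqref{eq:dinverse}, composes both sides with the cups/caps that bend the two legs back, and then collapses the resulting detours on the $E$ side using the snake equations~\eqref{eq:snakeequations} for the private wire and the topological axioms~\eqref{eq:topological1} for the public region, leaving $E$ on one side and the bent $D^{-1}$ on the other. Your observation that only the invertibility of $D$ (not of $E$) is used is also consistent with the paper's argument.
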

\begin{proof}
We apply the topological properties of public information to expression~\eqref{eq:dinverse} to obtain the following:
{\def\diagramscale{0.86}
\begin{calign}
\begin{aligned}
\begin{tikzpicture}[scale=\diagramscale, thick, yscale=-1]
    \begin{pgfonlayer}{foreground}
    \node (M) [minimum width=\componentwidth, minimum height=18pt, fill=white, draw=black, thick, anchor=center] at (1.0,1.0) {$D ^{-1}$};
    \end{pgfonlayer}
    \draw [] (1.5,2.75)
        to (1.5,1.0);
    \draw [fill=\fillA] (0.5,-0.75)
        to (0.5,1.5)
        to [out=up, in=up, looseness=2] (-0.5,1.5)
        to (-0.5,-0.75);
    \draw (1.5,1.6 |- M.south)
        to [out=down, in=up, looseness=1.5] (1.5,0.5)
        to [out=down, in=down, looseness=2] (2.5,0.5)
        to (2.5,2.75);
\end{tikzpicture}
\end{aligned}
\hspace{5pt}=\hspace{5pt}
\begin{aligned}
\begin{tikzpicture}[scale=\diagramscale, thick]
    \begin{pgfonlayer}{foreground}
    \node (M) [minimum width=\componentwidth, minimum height=18pt, fill=white, draw=black, thick, anchor=center] at (1.0,1.0) {$E$};
    \end{pgfonlayer}
    \draw [fill=\fillA, draw=none] (1.5,2.75)
        to (1.5,1.0)
        to (0.5,1)
        to (0.5,1.5)
        to [out=up, in=up, looseness=2] (-0.5,1.5)
        to (-0.5,0.5)
        to [out=down, in=down, looseness=2] (-1.5,0.5)
        to (-1.5,2.75);
    \draw [] (1.5,2.75)
        to (1.5,1.0)
        to (0.5,1)
        to (0.5,1.5)
        to [out=up, in=up, looseness=2] (-0.5,1.5)
        to (-0.5,0.5) to [out=down, in=down, looseness=2] (-1.5,0.5) to (-1.5,2.75);
    \draw (0.5,-0.75) to (0.5,1.5);
    \draw (1.5,1.6 |- M.south)
        to [out=down, in=up, looseness=1.5] (1.5,0.5)
        to [out=down, in=down, looseness=2] (2.5,0.5)
        to (2.5,1.5) to [out=up, in=up, looseness=2] (3.5,1.5) to (3.5,-0.75);
\end{tikzpicture}
\end{aligned}
\end{calign}}The right-hand side of this expression evaluates to $E$, by the topological properties~\eqref{eq:topological1} of 2\-dimensional regions and the snake equations~\eqref{eq:snakeequations}.
\end{proof}

While property~\eqref{eq:security1} is primary, there are other security properties of the encryption process that we could consider. The first states that if we encode with a random key, this is equivalent to deleting the original message and producing random ciphertext:
\begin{equation}
\label{eq:security2}
\begin{aligned}
\begin{tikzpicture}[scale=\diagramscale, thick]
    \begin{pgfonlayer}{foreground}
    \node (M) [minimum width=\componentwidth, minimum height=18pt, fill=white, draw=black, thick, anchor=center] at (1.0,1.0) {$E$};
    \end{pgfonlayer}
    \draw [fill=\fillA] (0.5,2)
        to (0.5,1.0)
        to (1.5,1.0)
        to (1.5,2);
    \draw (0.5,-0.5) to (0.5,1.5);
    \draw (1.5,1.6 |- M.south)
        to (1.5,0.2) node [Vertex] {};
\end{tikzpicture}
\end{aligned}
\quad=\quad
\begin{aligned}
\begin{tikzpicture}[scale=\diagramscale, thick]
    \begin{pgfonlayer}{foreground}
    \end{pgfonlayer}
    \draw [fill=\fillA] (0.5,2)
        to (0.5,1.5)
        to [out=down, in=down, looseness=2] (1.5,1.5)
        to (1.5,2);
    \draw (1,-0.5) to (1,0.2) node [Vertex] {};
\end{tikzpicture}
\end{aligned}
\end{equation}
Secondly, we could encode a random message with a specified key:
\begin{equation}
\label{eq:security3}
\begin{aligned}
\begin{tikzpicture}[scale=\diagramscale, thick]
    \begin{pgfonlayer}{foreground}
    \node (M) [minimum width=\componentwidth, minimum height=18pt, fill=white, draw=black, thick, anchor=center] at (1.0,1.0) {$E$};
    \end{pgfonlayer}
    \draw [fill=\fillA] (0.5,2)
        to (0.5,1.0)
        to (1.5,1.0)
        to (1.5,2);
    \draw (1.5,-0.5) to (1.5,1.5);
    \draw (0.5,1.6 |- M.south)
        to (0.5,0.2) node [Vertex] {};
\end{tikzpicture}
\end{aligned}
\quad=\quad
\begin{aligned}
\begin{tikzpicture}[scale=\diagramscale, thick]
    \begin{pgfonlayer}{foreground}
    \end{pgfonlayer}
    \draw [fill=\fillA] (0.5,2)
        to (0.5,1.5)
        to [out=down, in=down, looseness=2] (1.5,1.5)
        to (1.5,2);
    \draw (1,-0.5) to (1,0.2) node [Vertex] {};
\end{tikzpicture}
\end{aligned}
\end{equation}
This property says that this is the same as deleting the key, and producing a random ciphertext.

We can also consider security properties for the decryption process. 
\begin{equation}
\label{eq:security4}
\begin{aligned}
\begin{tikzpicture}[scale=\diagramscale, thick]
    \begin{pgfonlayer}{foreground}
    \node (M) [minimum width=\componentwidth, minimum height=18pt, fill=white, draw=black, thick, anchor=center] at (1.0,1.0) {$D$};
    \end{pgfonlayer}
    \draw [fill=\fillA, draw=none] (0.5,2) rectangle (-0.5,-0.5);
    \draw (0.5,2)
        to (0.5,1.0)
        to (1.5,1.0)
        to (1.5,2);
    \draw (0.5,-0.5) to (0.5,1.5);
    \draw (1.5,1.6 |- M.south)
        to (1.5,0.25) node [Vertex] {};
\end{tikzpicture}
\end{aligned}
\quad=\quad
\begin{aligned}
\begin{tikzpicture}[scale=\diagramscale, thick]
    \draw [fill=\fillA, draw=none] (0.5,2) rectangle (-0.5,-0.5);
    \draw (0.5,-0.5) to (0.5,2.0);
    \draw (1.5,2)
        to (1.5,0.75) node [Vertex] {};
\end{tikzpicture}
\end{aligned}
\end{equation}
This says that if an attacker chooses nondeterministically from the space of all possible keys, every possible message can be produced, regardless of the ciphertext. So if an attacker has no knowledge of the key, they cannot extract information from the ciphertext.

In fact, we can use our formalism to show that all of these security properties follow from the primary security property~\eqref{eq:security1}.
\begin{theorem}
In classical encrypted communication, \eqref{eq:security1} implies \eqref{eq:security2}, \eqref{eq:security3} and \eqref{eq:security4}.
\end{theorem}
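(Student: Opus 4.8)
The plan is to handle the three implications separately: \eqref{eq:security2} by a single planar rewrite, and \eqref{eq:security3} and \eqref{eq:security4} by feeding \eqref{eq:security1} into the correctness equation~\eqref{eq:simplifiedencryptedcommunication} and exploiting the invertibility of the decryption step.

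First I would dispatch \eqref{eq:security1}\,$\Rightarrow$\,\eqref{eq:security2}. The two equations have literally the same right-hand side --- delete the message and create a random ciphertext --- so it suffices to identify the left-hand sides. In \eqref{eq:security1} the key input of $E$ is supplied by one leg of the one-time pad $\eta$ whose other leg is immediately deleted, and by the twist--deletion equation \eqref{eq:twistdelete1} this composite of unit and deletion is exactly the random-preparation vertex feeding $E$ in \eqref{eq:security2}. This step is purely topological and needs nothing beyond \eqref{eq:twistdelete1}.

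For \eqref{eq:security4} I would use that decryption is invertible, as established by the earlier theorem (which itself rests on Lemma~\ref{lem:endoinvertible}). Since $D$ leaves the public ciphertext untouched, in \cat{2Rel} it is a family of relations indexed by the ciphertext value, and invertibility of the 2-cell $D$ forces each component to be a bijection of the private key space onto the private message space. Diagrammatically I would precompose the asserted identity \eqref{eq:security4} with $D^{-1}$, reducing it to the claim $\id_C \otimes r_K = D^{-1} \circ (\id_C \otimes r_P)$, where $r_K,r_P$ denote random preparation; substituting the explicit form of $D^{-1}$ from \eqref{eq:dinverse} (a bent copy of $E$) and applying the snake equations \eqref{eq:snakeequations} with the topological axioms \eqref{eq:topological1} then collapses the right-hand side using \eqref{eq:security2}. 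Intuitively, a fibrewise bijection carries the uniform random key to the uniform random message, which is precisely \eqref{eq:security4}.

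Finally \eqref{eq:security3} --- that a random message with a fixed key again yields a random ciphertext --- I would obtain as the message-side dual of \eqref{eq:security2}, via the reconstruction \eqref{eq:EintermsofDinv} that presents $E$ as $D^{-1}$ with its ciphertext leg capped by the public-data comparison and its key legs bent through the self-duality $\eta$. Feeding a random message into this presentation of $E$ and sliding the random-preparation vertex across the caps with the snake equations trades the key-slot statement \eqref{eq:security1}/\eqref{eq:security2} for the message-slot statement \eqref{eq:security3}. The hardest step is exactly this one: $E$ is \emph{not} assumed symmetric in its two private inputs, so the message--key symmetry must be manufactured entirely from correctness and the invertibility of $D$. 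The delicate point is bookkeeping --- which leg is bent by the self-duality cup versus the public-data cap --- and checking that the planar isotopy transports genuine surjectivity, not mere injectivity; here the cardinality coincidence $|P|=|C|=|K|$, forced jointly by \eqref{eq:security2}, invertibility of $D$, and correctness, is what upgrades the injection to a bijection and closes the argument.
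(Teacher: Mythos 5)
Your reduction of \eqref{eq:security2} to the twist--deletion equation \eqref{eq:twistdelete1} is exactly the paper's first step, and your overall architecture for the other two properties --- invertibility of $D$, the bent-$E$ expression \eqref{eq:dinverse} for $D^{-1}$, and the reconstruction \eqref{eq:EintermsofDinv} of $E$ --- is the right machinery. The gap is in the central step for \eqref{eq:security4}. When you substitute \eqref{eq:dinverse} into $D^{-1}\circ(\id_C\otimes r_P)$, the random preparation lands on the \emph{message} input of $E$, while the key input of $E$ is attached to the $\eta$ cup whose other leg is the live key \emph{output} of $D^{-1}$. That configuration is not an instance of \eqref{eq:security2} (which requires a random \emph{key} input), nor of \eqref{eq:security1} (which requires the other leg of $\eta$ to be \emph{deleted}, not emitted as output); the only property that would collapse it is \eqref{eq:security3}, which you prove afterwards using precisely this intermediate identity, so the plan as stated is circular.

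The missing idea is to work with deletion on the output side rather than random creation on the input side. Capping the key output of $D^{-1}$ with the deletion vertex turns the $\eta$ cup into exactly the left-hand side of \eqref{eq:security1} (after sliding the public region using \eqref{eq:topological1}), and the hypothesis then yields $(\id_C\otimes\mathrm{delete}_K)\circ D^{-1}=\id_C\otimes\mathrm{delete}_P$. One then obtains \eqref{eq:security4} by taking relational converses: an invertible 2\-cell of \cat{2Rel} is a family of bijections, so its converse is its inverse, the converse of deletion is random preparation, and taking converses is functorial. This converse step is what legitimately trades deletion statements for random-creation statements; your appeal to a cardinality coincidence $|P|=|C|=|K|$ to ``upgrade an injection to a bijection'' does not substitute for it and is not needed. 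Once \eqref{eq:security4} is in hand, your route to \eqref{eq:security3} via \eqref{eq:EintermsofDinv} does go through, essentially as in the paper: postcompose \eqref{eq:security4} with $D^{-1}$ to get $D^{-1}\circ(\id_C\otimes r_P)=\id_C\otimes r_K$, and substitute this into the reconstruction of $E$.
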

\begin{proof}
The implication $\eqref{eq:security1} \Rightarrow \eqref{eq:security2}$ follows from the topological property~\eqref{eq:twistdelete1} of the deletion map. For the other implications, we compose expression~\eqref{eq:dinverse} for $D ^{-1}$ with the deletion map at the top-right leg, obtaining the following:
\begin{equation}
\nonumber
\begin{aligned}
\begin{tikzpicture}[scale=0.8, thick]
    \begin{pgfonlayer}{foreground}
    \end{pgfonlayer}
    \draw [fill=\fillA, draw=none] (0.5,2.75)
        to (0.5,-0.75)
        to (1.5,-0.75)
        to (1.5,2.75);
    \draw (2.5,-0.75) to (2.5,2.0) node [Vertex] {};
    \draw (1.5,2.75) to (1.5,-0.75);
    \node (U) [minimum width=\componentwidth, minimum height=18pt, fill=white, draw=black, thick] at (2,1.0) {$D ^{-1}$};
\end{tikzpicture}
\end{aligned}
\quad=\quad
\begin{aligned}
\begin{tikzpicture}[scale=0.8, thick]
    \begin{pgfonlayer}{foreground}
    \node (M) [minimum width=\componentwidth, minimum height=18pt, fill=white, draw=black, thick, anchor=center] at (1.0,1.0) {$E$};
    \end{pgfonlayer}
    \draw [fill=\fillA, draw=none] (1.5,2.75)
        to (1.5,1.0)
        to (0.5,1)
        to (0.5,1.5)
        to [out=up, in=up, looseness=2] (-0.5,1.5)
        to (-0.5,-0.75)
        to (-1.5,-0.75)
        to (-1.5,2.75);
    \draw [] (1.5,2.75)
        to (1.5,1.0)
        to (0.5,1)
        to (0.5,1.5)
        to [out=up, in=up, looseness=2] (-0.5,1.5)
        to (-0.5,-0.75);
    \draw (0.5,-0.75) to (0.5,1.5);
    \draw (1.5,1.6 |- M.south)
        to (1.5,0.5)
        to [out=down, in=down, looseness=2] (2.5,0.5)
        to (2.5,2.0) node [Vertex] {};
\end{tikzpicture}
\end{aligned}
\end{equation}
\begin{equation}
\label{eq:2implies5}
\hspace{30pt}\quad=\quad
\begin{aligned}
\begin{tikzpicture}[scale=0.8, thick]
    \draw [fill=\fillA, draw=none] (1.5,2.75)
        to (1.5,1.0)
        to [out=down, in=down, looseness=2] (0.5,1)
        to (0.5,1.5)
        to [out=up, in=up, looseness=2] (-0.5,1.5)
        to (-0.5,-0.75)
        to (-1.5,-0.75)
        to (-1.5,2.75);
    \draw [] (1.5,2.75)
        to (1.5,1.0)
        to [out=down, in=down, looseness=2] (0.5,1)
        to (0.5,1.5)
        to [out=up, in=up, looseness=2] (-0.5,1.5)
        to (-0.5,-0.75);
    \draw (1.0,0) node [Vertex] {}
        to (1.0,-0.75);
\end{tikzpicture}
\end{aligned}
\quad=\quad
\begin{aligned}
\begin{tikzpicture}[scale=0.8, thick]
    \draw [fill=\fillA, draw=none] (-0.5,2.75)
        to (-0.5,-0.75)
        to (-1.5,-0.75)
        to (-1.5,2.75);
    \draw [] (-0.5,2.75)
        to (-0.5,-0.75);
    \draw (0.3,1) node [Vertex] {}
        to (0.3,-0.75);
\end{tikzpicture}
\end{aligned}
\end{equation}
Every invertible 2\-cell in \cat{Rel} is a family of bijections, and hence its converse is its inverse. Taking the converse is a functorial operation, and so taking the converse of of the first and last diagram here, we obtain property~\eqref{eq:security4}:
\begin{equation}
\tag{\ref{eq:security4}}
\begin{aligned}
\begin{tikzpicture}[scale=0.9, yscale=1, thick]
    \begin{pgfonlayer}{foreground}
    \node (M) [minimum width=\componentwidth, minimum height=18pt, fill=white, draw=black, thick, anchor=center] at (1.0,1.0) {$D$};
    \end{pgfonlayer}
    \draw [fill=\fillA, draw=none] (0.5,2) rectangle (-0.5,-0.5);
    \draw (0.5,2)
        to (0.5,1.0)
        to (1.5,1.0)
        to (1.5,2);
    \draw (0.5,-0.5) to (0.5,1.5);
    \draw (1.5,1.6 |- M.south)
        to (1.5,0.15) node [Vertex] {};
\end{tikzpicture}
\end{aligned}
\quad=\quad
\begin{aligned}
\begin{tikzpicture}[scale=0.9, thick]
    \draw [fill=\fillA, draw=none] (0.5,2) rectangle (-0.5,-0.5);
    \draw (0.5,-0.5) to (0.5,2.0);
    \draw (1.5,2)
        to (1.5,0.75) node [Vertex] {};
\end{tikzpicture}
\end{aligned}
\end{equation}
For the final property~\eqref{eq:security3}, we postcompose this expression with the 2\-cell $D ^{-1}$, obtaining the following expression:
\begin{equation}
\begin{aligned}
\begin{tikzpicture}[scale=0.9, yscale=1, thick]
    \draw [fill=\fillA, draw=none] (0.5,2) rectangle (-0.5,-0.5);
    \draw (0.5,-0.5) to (0.5,2.0);
    \draw (1.5,2)
        to (1.5,0.75) node [Vertex] {};
\end{tikzpicture}
\end{aligned}
\quad=\quad
\begin{aligned}
\begin{tikzpicture}[scale=0.9, yscale=1, thick]
    \begin{pgfonlayer}{foreground}
    \node (M) [minimum width=\componentwidth, minimum height=18pt, fill=white, draw=black, thick, anchor=center] at (1.0,1.0) {$D^{-1}$};
    \end{pgfonlayer}
    \draw [fill=\fillA, draw=none] (0.5,2) rectangle (-0.5,-0.5);
    \draw (0.5,2)
        to (0.5,1.0)
        to (1.5,1.0)
        to (1.5,2);
    \draw (0.5,-0.5) to (0.5,1.5);
    \draw (1.5,1.6 |- M.south)
        to (1.5,0.15) node [Vertex] {};
\end{tikzpicture}
\end{aligned}
\end{equation}
We can use this to prove security property~\eqref{eq:security3}, where we also make use of expression~\eqref{eq:EintermsofDinv} giving $E$ in terms of $D ^{-1}$:
\begin{equation}
\nonumber
\begin{aligned}
\begin{tikzpicture}[scale=\diagramscale, thick]
    \begin{pgfonlayer}{foreground}
    \end{pgfonlayer}
    \draw (2.5,-0.75) to (2.5,2.75);
    \draw (1.5,2.75) to (1.5,0) node [Vertex] {};
    \draw [fill=\fillA] (1.5,2.75) to (1.5,1) to (2.5,1) to (2.5,2.75);
    \node (U) [minimum width=\componentwidth, minimum height=18pt, fill=white, draw=black, thick] at (2,1.0) {$E$};
\end{tikzpicture}
\end{aligned}
\quad=\quad
\begin{aligned}
\begin{tikzpicture}[scale=\diagramscale, thick, yscale=-1]
    \begin{pgfonlayer}{foreground}
    \node (M) [minimum width=\componentwidth, minimum height=18pt, fill=white, draw=black, thick, anchor=center] at (1.0,1.0) {$D ^{-1}$};
    \end{pgfonlayer}
    \draw [] (1.5,2.0) node [Vertex] {}
        to (1.5,1.0);
    \draw [fill=\fillA] (0.5,-0.75)
        to (0.5,1.5)
        to [out=up, in=up, looseness=2] (-0.5,1.5)
        to (-0.5,-0.75);
    \draw (1.5,1.6 |- M.south)
        to [out=down, in=up, looseness=1.5] (1.5,0.5)
        to [out=down, in=down, looseness=2] (2.5,0.5)
        to (2.5,2.75);
\end{tikzpicture}
\end{aligned}
\end{equation}
\begin{equation}
=\quad
\begin{aligned}
\begin{tikzpicture}[scale=\diagramscale, thick, yscale=-1]
    \draw [fill=\fillA] (0.5,-0.75)
        to (0.5,1.5)
        to [out=up, in=up, looseness=2] (-0.5,1.5)
        to (-0.5,-0.75);
    \draw (1.5,1.0) node [Vertex] {}
        to (1.5,0.5)
        to [out=down, in=down, looseness=2] (2.5,0.5)
        to (2.5,2.75);
\end{tikzpicture}
\end{aligned}
\quad=\quad
\begin{aligned}
\begin{tikzpicture}[scale=\diagramscale, thick, yscale=-1]
    \draw [fill=\fillA] (0.5,-0.75)
        to (0.5,0.25)
        to [out=up, in=up, looseness=2] (-0.5,0.25)
        to (-0.5,-0.75);
    \draw (0,1.5) node [Vertex] {}
        to (0,2.75);
\end{tikzpicture}
\end{aligned}
\end{equation}
This completes the proof.
\end{proof}

\subsection{Secret sharing}

We can represent correctness of a secret sharing procedure in the following way:
\begin{equation}
\label{eq:explicitdensecoding}
\begin{aligned}
\begin{tikzpicture}[scale=\diagramscale, thick]
    \begin{pgfonlayer}{foreground}
    \node (U) [minimum width=\componentwidth, minimum height=18pt, fill=white, draw=black, thick] at (2,2) {$E$};
    \node (M) [minimum width=\componentwidth, minimum height=18pt, fill=white, draw=black, thick, anchor=center] at (1.0,1.0) {$D$};
    \end{pgfonlayer}
    \draw [draw=none, fill=\fillA] (0.5,-0.5) rectangle (-0.5,3);
    \draw [fill=\fillA, draw=none, thick] (1.5,3)
        to (1.5,2)
        to (2.5,2)
        to (2.5,3);
    \draw (0.5,-0.5) to (0.5,3.0);
    \draw (2.5,3)
        to (2.5,0.5)
        to [out=down, in=down, looseness=2] (1.5,0.5)
        to (1.5,3);
\end{tikzpicture}
\end{aligned}
\quad=\quad
\begin{aligned}
\begin{tikzpicture}[scale=\diagramscale, thick]
    \draw [fill=\fillA, draw=none] (-0.5,3.5)
        rectangle (2.5,-0.5);
    \draw (2.5,-0.5) to (2.5,3.5);
    \draw [fill=white] (0.5,3.5)
        to (0.5,1.5)
        to [out=down, in=down, looseness=2] (1.5,1.5)
        to (1.5,3.5);
\end{tikzpicture}
\end{aligned}
\end{equation}
On the left-hand side we begin with some pre-existing public information. This is the information to be communicated by the secret sharing procedure. We prepare two correlated systems forming a one-time pad, and then manipulate the first copy by a procedure $D$ that depends on the value of the classical data. The result is a pair of messages, which are our ciphertexts. Both are then brought together and consumed by a process $E$, producing public information. This process is successful when the result is to copy the original public information.

The important security property of a secret sharing procedure is that if only one ciphertext is available, then no information about the original message can be regained. A strong, constructive way to phrase this is to say that if one of the ciphertexts is erased, the other becomes uniformly random, and independent of the original message. This gives two conditions, with the following graphical representations:
\begin{equation}
\label{eq:secretsharing:security1}
\begin{aligned}
\begin{tikzpicture}[scale=\diagramscale, thick]
    \begin{pgfonlayer}{foreground}
    \node (M) [minimum width=\componentwidth, minimum height=18pt, fill=white, draw=black, thick, anchor=center] at (1.0,1.0) {$D$};
    \end{pgfonlayer}
    \draw [fill=\fillA, draw=none] (0.5,2.5) rectangle (-0.5,-0.5);
    \draw (0.5,2.5)
        to (0.5,1.0)
        to (1.5,1.0)
        to (1.5,2.5);
    \draw (0.5,-0.5) to (0.5,1.5);
    \draw (1.5,1.6 |- M.south)
        to (1.5,0.5) to [out=down, in=down, looseness=2] (2.5,0.5) to (2.5,1.85) node [Vertex] {};
\end{tikzpicture}
\end{aligned}
\quad=\quad
\begin{aligned}
\begin{tikzpicture}[scale=\diagramscale, thick]
    \draw [fill=\fillA, draw=none] (0.5,2.5) rectangle (-0.5,-0.5);
    \draw (0.5,-0.5) to (0.5,2.5);
    \draw (1.5,2.5)
        to (1.5,1.0) node [Vertex] {};
\end{tikzpicture}
\end{aligned}
\end{equation}
\begin{equation}
\label{eq:secretsharing:security2}
\begin{aligned}
\begin{tikzpicture}[scale=\diagramscale, thick]
    \begin{pgfonlayer}{foreground}
    \node (M) [minimum width=\componentwidth, minimum height=18pt, fill=white, draw=black, thick, anchor=center] at (1.0,1.0) {$D$};
    \end{pgfonlayer}
    \draw [fill=\fillA, draw=none] (0.5,2.5) rectangle (-0.5,-0.5);
    \draw (0.5,2.5) 
        to (0.5,1.0)
        to (1.5,1.0)
        to (1.5,1.85) node [Vertex] {};
    \draw (0.5,-0.5) to (0.5,1.5);
    \draw (1.5,1.6 |- M.south)
        to (1.5,0.5) to [out=down, in=down, looseness=2] (2.5,0.5) to (2.5,2.5);
\end{tikzpicture}
\end{aligned}
\quad=\quad
\begin{aligned}
\begin{tikzpicture}[scale=\diagramscale, thick]
    \draw [fill=\fillA, draw=none] (0.5,2.5) rectangle (-0.5,-0.5);
    \draw (0.5,-0.5) to (0.5,2.5);
    \draw (1.5,2.5)
        to (1.5,1.0) node [Vertex] {};
\end{tikzpicture}
\end{aligned}
\end{equation}

\noindent
Equation~\eqref{eq:explicitdensecoding} has an identical structure to the quantum dense coding equation given in~\cite{v12-hsqp}. 
\subsection{Key exchange}
\label{keyexchange}

\begin{center}
\tikzset{morphismbox/.style={minimum width=1.1cm, minimum height=0.5cm, fill=white, draw=black, thick}}
\setlength{\vertexradius}{2.3pt}
\begin{equation}
\label{eq:diffie-hellman}
\begin{aligned}
\begin{tikzpicture}[thick, scale=0.7]
\path [use as bounding box] (-4,-3.2) rectangle (4,4.7);
\draw [fill=\fillA, draw=none] (-4,4.7) to (-3,4.7) to (-3,0) to [out=down, in=down] (3,0) to (3,4.7) to (4,4.7) to (4,-2.5) to (-4,-2.5) to (-4,4.7);
\draw (-3,4.7) to (-3,0) to [out=down, in=down] (3,0) to (3,4.7);
\node at (-2,1) [Vertex] {};
\draw (-2,1) to (-2,0) to [out=down, in=\nwangle] (-1.5,-0.5) node [Vertex] {};
\draw (-1.5,-0.5) to node [auto] {$x$} (-1.5,-1) node [Vertex] {};
\draw (2,1) to (2,0) to [out=down, in=\neangle] (1.5,-0.5) node [Vertex] {} to [out=\nwangle, in=down] (1,0) to (1,2.5);
\draw (1.5,-0.5) to node [auto] {$y$} (1.5,-1) node [Vertex] {};
\draw [fill=\fillA] (-2,1) to [out=\neangle, in=down] (0,1.75) to (0,2.45) to [out=up, in=up, looseness=0.5] (-2.5,2.45) to (-2.5,1.5) to [out=down, in=\nwangle] (-2,1);
\node at (2,1) [Vertex] {};
\draw [fill=\fillA] (2,1) to [out=\nwangle, in=down] (1.5,1.5) to (1.5,2.5) to [out=up, in=down] (0,3.25) to (0,3.95) to [out=up, in=up, looseness=0.5] (2.5,3.95) to (2.5,1.5) to [out=down, in=\neangle] (2,1);
\draw (-1.5,-0.5) to [out=\neangle, in=down] (-1,0) to (-1,4.7);
\draw (1,2.4) to (1,4.7);
\node at (-2.5,0.4) [morphismbox] {$D$};
\node at (2.5,0.4) [morphismbox] {$D$};
\node at (0.5,2.05) [morphismbox] {$D$};
\node at (-0.5,3.6) [morphismbox] {$D$};
\node at (-3.2,-1.8) {$g$};
\node at (-2,1.5) {$g^x$};
\node at (2,1.5) {$g^y$};
\node [anchor=south] at (1,4.6) {$(g^y)^x$};
\node [anchor=south] at (-1,4.6) {$(g^x)^y$};
\draw [thin, dashed] (0,4.7) to (0,-3.2);
\node at (-2,-3) {Alice};
\node at (2,-3) {Bob};
\end{tikzpicture}
\end{aligned}
\qquad = \qquad
\begin{aligned}
\begin{tikzpicture}[thick, scale=0.7]
\path [use as bounding box] (-3,-3.2) rectangle (3,4.7);
\draw [fill=\fillA, draw=none] (-3,4.7) to (-2,4.7) to (-2,-0.6) to [out=down, in=down] (2,-0.6) to (2,4.7) to (3,4.7) to (3,-2.5) to (-3,-2.5) to (-3,4.7);
\draw (-2,4.7) to (-2,-0.6) to [out=down, in=down] (2,-0.6) to (2,4.7);
\node at (-2.3,-1.8) {$g$};
\draw [thin, dashed] (0,4.7) to (0,-3.2);
\node at (-1.5,-3) {Alice};
\node at (1.5,-3) {Bob};
\draw (-1,4.7) to (-1,0) to [out=down, in=\nwangle] (0,-0.75) node [Vertex] {} to [out=\neangle, in=down] (1,0) to (1,4.7);
\draw (0,-0.75) to node [auto] {$z$} (0,-1.25) node [Vertex] {};
\node [anchor=south] at (1,4.7) {$z$};
\node [anchor=south] at (-1,4.7) {$z$};
\end{tikzpicture}
\end{aligned}
\end{equation}
\vspace{-9pt}
\end{center}

Our final study is Diffie-Hellman key exchange~\cite{dh76-ndc}, a procedure whereby two parties who share common public base information can obtain a shared secret key by exchanging only public information. The bicategorical diagram representing it is given as equation~(\ref{eq:diffie-hellman}). The symmetric monoidal bicategory structure is essential here, as it gives meaning to the overlapping of parts of the diagram.

Ambient public information represents the base $g$ to be used by the protocol. Alice and Bob nondeterministically choose private keys $x$ and $y$ respectively, which they duplicate. They then each apply a controlled operation $D$, which in the conventional implementation depends on public information $p$, and transforms private information as $q \mapsto p^q$ with respect to some fixed cyclic group structure. The result of this is then published and transferred to the other party, where $D$ is applied once again. As a result, both parties share the key $g ^{xy}$.

The protocol is implemented correctly if, neglecting the public data produced during the procedure, the private keys are identical and uncorrelated with the initial base. Erasing the public data is necessary for information-theoretical security in the classical case, and for maintaining coherence in any quantum interpretation.

Our graphical formalism captures this structure in a clear way, which moreover can be used to formally verify correctness of an implementation.


\begin{thebibliography}{10}
\providecommand{\url}[1]{#1}
\csname url@samestyle\endcsname
\providecommand{\newblock}{\relax}
\providecommand{\bibinfo}[2]{#2}
\providecommand{\BIBentrySTDinterwordspacing}{\spaceskip=0pt\relax}
\providecommand{\BIBentryALTinterwordstretchfactor}{4}
\providecommand{\BIBentryALTinterwordspacing}{\spaceskip=\fontdimen2\font plus
\BIBentryALTinterwordstretchfactor\fontdimen3\font minus
  \fontdimen4\font\relax}
\providecommand{\BIBforeignlanguage}[2]{{%
\expandafter\ifx\csname l@#1\endcsname\relax
\typeout{** WARNING: IEEEtran.bst: No hyphenation pattern has been}%
\typeout{** loaded for the language `#1'. Using the pattern for}%
\typeout{** the default language instead.}%
\else
\language=\csname l@#1\endcsname
\fi
#2}}
\providecommand{\BIBdecl}{\relax}
\BIBdecl

\bibitem{g68-ua}
R.~Whitehead, \emph{A Treatise on Universal Algebra}.\hskip 1em plus 0.5em
  minus 0.4em\relax Cambridge University Press, 1898.

\bibitem{l80-flc}
J.~Lambek, \emph{From lambda calculus to Cartesian closed categories}.\hskip
  1em plus 0.5em minus 0.4em\relax Academic Press, 1980, pp. 376--402.

\bibitem{w05-cae}
G.~Washburn, ``Cause and effect: Type systems for effects and dependencies,''
  University of Pennsylvania Department of Computer and Information Science,
  Tech. Rep. MS-CIS-05-05, November 2005.

\bibitem{ac04-csqp}
S.~Abramsky and B.~Coecke, ``A categorical semantics of quantum protocols,''
  \emph{Proceedings of the 19th Annual IEEE Symposium on Logic in Computer
  Science}, pp. 415--425, 2004, iEEE Computer Science Press.

\bibitem{ac08-cqm}
------, \emph{Handbook of Quantum Logic and Quantum Structures}.\hskip 1em plus
  0.5em minus 0.4em\relax Elsevier, 2008, vol.~2, ch. Categorical Quantum
  Mechanics.

\bibitem{hv13-cqm}
C.~Heunen and J.~Vicary, ``Lectures on categorical quantum mechanics,''
  available at http://www.cs.ox.ac.uk/courses/cqm.

\bibitem{bs-rosetta}
J.~Baez and M.~Stay, \emph{Physics, Topology, Logic, and Computation: A Rosetta
  Stone}.\hskip 1em plus 0.5em minus 0.4em\relax Springer, 2011, vol. 813, pp.
  95--172.

\bibitem{bd95-hda}
J.~C. Baez and J.~Dolan, ``Higher-dimensional algebra and topological quantum
  field theory,'' \emph{Journal of Mathematical Physics}, vol.~36, pp.
  6073--6105, 1995.

\bibitem{v12-hsqp}
J.~Vicary, ``Higher semantics of quantum protocols,'' \emph{Proceedings of the
  27th Annual IEEE Symposium on Logic in Computer Science}, 2012.

\bibitem{b93-teleport}
C.~H. Bennett, G.~Brassard, C.~Cr\'epeau, R.~Jozsa, A.~Peres, and W.~K.
  Wootters, ``Teleporting an unknown quantum state via dual classical and
  {{E}instein-{P}odolsky-{R}}osen channels,'' \emph{Physical Review Letters},
  vol.~70, no.~13, pp. 1895--1899, 1993.

\bibitem{p09-qcs}
D.~Pavlovic, ``Quantum and classical structures in nondeterministic
  computation,'' in \emph{Proceedings of Quantum Interaction 2009}.

\bibitem{s13-ccb}
M.~Stay, ``Compact closed bicategories,'' 2013, arXiv:1301.1053.

\bibitem{bs10-rosetta}
J.~C. Baez and M.~Stay, \emph{New Structures for Physics}.\hskip 1em plus 0.5em
  minus 0.4em\relax Springer, 2011, ch. Physics, Topology, Logic and
  Computation: A Rosetta Stone, pp. 95--172.

\bibitem{s11-sgl}
P.~Selinger, \emph{New Structures for Physics}.\hskip 1em plus 0.5em minus
  0.4em\relax Springer, 2011, ch. A Survey of Graphical Languages for Monoidal
  Categories, pp. 289--355.

\bibitem{l06-faaa}
\BIBentryALTinterwordspacing
A.~D. Lauda, ``Frobenius algebras and ambidextrous adjunctions,'' \emph{Theory
  and Applications of Categories}, vol.~16, no.~4, pp. 84--122, 2006. [Online].
  Available: \url{http://www.tac.mta.ca/tac/volumes/16/4/16-04abs.html}
\BIBentrySTDinterwordspacing

\bibitem{hcc12-rfa}
C.~Heunen, I.~Contreras, and A.~Cattaneo, ``Relative frobenius algebras are
  groupoids,'' \emph{Journal of Pure and Applied Algebra}, vol. 217, pp.
  114--124, 2012.

\bibitem{dh76-ndc}
W.~Diffie and M.~Hellman, ``New directions in cryptography,'' \emph{IEEE
  Transactions on Information Theory}, vol.~22, no.~6, pp. 644--654, 1976.

\end{thebibliography}


\end{document}